\renewcommand{\epsilon}{\varepsilon}
\newcommand{\Prob}[1]{\hbox{\rm I\kern-2pt P}\left[#1\right]}
\DeclareMathAlphabet{\mathsc}{OT1}{cmr}{m}{sc}
\newcommand{\cH}{{\mathcal{H}}}
\def\NumOfBaseFragments{\textsf{NUM-OF-BASE-FRAGMENTS}}
\def\ClusterEdges{\textsf{CLUSTER-EDGES}}
\def\ID{\textsf{ID}}
\def\ClusterID{\textsf{CLUSTER-ID}}
\def\MOEValue{\textsf{MOE-VALUE}}
\def\FinalMOEValue{\textsf{FINAL-MOE-VALUE}}
\def\IsFragmentRoot{\textsf{IS-FRAGMENT-ROOT}}
\def\GraphLeader{\mathcal{L}}
\def\BFSTree{\mathcal{T}}
\def\MSTConstruction{\textbf{Sing-MST}}
\def\LeaderElection{\textbf{LE}}
\def\TreeCounting{\textbf{Tree-Count}}
\def\FindMOE{\textbf{Find-MOE}}
\def\DiameterCalculation{\textbf{Diam-Calc}}
\def\ControlledGHS{\textbf{Controlled-GHS}}
\def\ColeVishkin{\textbf{Cole-Vishkin}}
\def\SpanningTreeConstruction{\textbf{ST-Cons}}
\def\Upcast{\textbf{Upcast}}
\def\Downcast{\textbf{Downcast}}
\def\FragmentBroadcast{\textbf{Frag-Bcast}}
\def\MPX{\textbf{MPX}}
\def\Transform{\textbf{Transform}}
\def\imax{i_{m}}
\def\intercluster{inter-cluster}
\def\CLT{\hat{T}}
\def\SPT{\tilde{T}}
\def\BFST{T^{BFS}}
\def\SUPT{T^{super}}
\def\diam{diam}
\def\Vstart[#1]{V^{start}_{#1}}
\def\Vactive[#1]{V^{act}_{#1}}
\def\EV{\mathcal{E}}
\def\polylog{\operatorname{polylog}}
\newcommand{\squishlist}{
 \begin{list}{$\bullet$}
  { \setlength{\itemsep}{0pt}
     \setlength{\parsep}{3pt}
     \setlength{\topsep}{3pt}
     \setlength{\partopsep}{0pt}
     \setlength{\leftmargin}{1.5em}
     \setlength{\labelwidth}{1em}
     \setlength{\labelsep}{0.5em} } }
\newcommand{\squishlisttwo}{
 \begin{list}{$\bullet$}
  { \setlength{\itemsep}{0pt}
     \setlength{\parsep}{0pt}
    \setlength{\topsep}{0pt}
    \setlength{\partopsep}{0pt}
    \setlength{\leftmargin}{2em}
    \setlength{\labelwidth}{1.5em}
    \setlength{\labelsep}{0.5em} } }
\newcommand{\squishend}{
  \end{list}  }
\title{An Almost Singularly Optimal Asynchronous Distributed MST Algorithm}
\author{Fabien Dufoulon}{Department of Computer Science, University of Houston, Houston, TX, USA}{fabien.dufoulon.cs@gmail.com}{https://orcid.org/0000-0003-2977-4109}{This work was supported in part by NSF grants CCF-1717075, CCF-1540512, IIS-1633720, and BSF grant 2016419.}
\author{Shay Kutten}{Faculty of Industrial Engineering and Management, Technion - Israel Institute of Technology, Haifa, Israel}{kutten@technion.ac.il}{https://orcid.org/0000-0003-2062-6855}{This work was supported in part by the Bi-national Science Foundation (BSF) grant 2016419 and supported in part by ISF grant 1346/22.}
\author{William K. {Moses Jr.}}{Department of Computer Science, University of Houston, Houston, TX, USA}{wkmjr3@gmail.com}{https://orcid.org/0000-0002-4533-7593}{This work was supported in part by NSF grants CCF1540512, IIS-1633720, CCF-1717075, and BSF grant 2016419.}
\author{Gopal Pandurangan}{Department of Computer Science, University of Houston, Houston, TX, USA}{gopal@cs.uh.edu}{https://orcid.org/0000-0001-5833-6592}{This work was supported in part by NSF grants CCF-1717075, CCF-1540512, IIS-1633720, and BSF grant 2016419.}
\author{David Peleg}{Department of Computer Science and Applied Mathematics, Weizmann Institute of Science, Rehovot, Israel}{david.peleg@weizmann.ac.il}{https://orcid.org/0000-0003-1590-0506}{This work was supported in part by the US-Israel Binational Science Foundation grant 2018043.}
\authorrunning{F. Dufoulon, S. Kutten, W.\,K. Moses Jr., G. Pandurangan, and D. Peleg}
\keywords{Asynchronous networks, Minimum Spanning Tree, Distributed Algorithm, Singularly Optimal}
\begin{document}
\maketitle

\begin{abstract}
A singularly (near) optimal distributed algorithm is one that is (near) optimal in \emph{two} criteria, namely,
its time and message complexities. For \emph{synchronous} $\mathcal{CONGEST}$ networks, such algorithms are known for fundamental
distributed computing problems such
as leader election [Kutten et al., JACM 2015] and
Minimum Spanning Tree
(MST) construction [Pandurangan et al., STOC 2017, Elkin, PODC 2017].  
However, it is  open whether a singularly (near) optimal bound can be obtained  for the MST construction problem in general \emph{asynchronous} $\mathcal{CONGEST}$ networks.

In this paper, we present 
 a randomized distributed MST  algorithm  that, with high probability, computes an MST in \emph{asynchronous} $\mathcal{CONGEST}$ networks  and  takes $\Tilde{O}(D^{1+\epsilon} + \sqrt{n})$ time and $\Tilde{O}(m)$ messages\footnote{The $\Tilde{O}$ notation hides a $\polylog(n)$ factor and the $\Tilde{\Omega}$ notation hides a $1/\polylog(n)$ factor.},
 where $n$ is the number of  nodes, $m$ the number of edges,  $D$ is the diameter of the network, and $\epsilon >0$ is an arbitrarily  small constant (both time and message bounds hold with high probability). Since $\tilde{\Omega}(D+\sqrt{n})$ and $\Omega(m)$ are respective time and message lower bounds for
 distributed MST construction in the standard $KT_0$ model, our algorithm  is message optimal (up to a $\polylog(n)$ factor) and almost time optimal (except for a $D^{\epsilon}$ factor).
 Our result   answers an open question raised in Mashregi and King [DISC 2019] by giving the first known  asynchronous MST algorithm that has sublinear time (for all $D = O(n^{1-\epsilon})$) and uses $\tilde{O}(m)$ messages. Using a result of Mashregi and King [DISC 2019], this also yields the first asynchronous MST algorithm that is sublinear in both time and messages in the
 $KT_1$ $\mathcal{CONGEST}$ model.

 A key tool in our algorithm is the construction of a low diameter rooted spanning tree in asynchronous $\mathcal{CONGEST}$ 
 that has depth $\tilde{O}(D^{1+\epsilon})$ (for an arbitrarily small constant $\epsilon > 0$) in $\tilde{O}(D^{1+\epsilon})$ time 
and $\tilde{O}(m)$ messages.
 To the best of our knowledge, this is the first such construction that is almost singularly optimal in the asynchronous setting. This tree construction 
 may be 
 of independent interest as it can also be used for efficiently performing basic  tasks such as {\em verified}
broadcast and convergecast  in asynchronous networks.

\end{abstract}

\section{Introduction}
\label{sec:intro}

\subsection{Background and Motivation}

Singularly (near) optimal distributed algorithms are those that are (near) optimal both in their message complexity and in their time complexity.\footnote{In this paper,
henceforth, when we say ``near optimal'' we  mean ``optimal up to a $\polylog(n)$ factor'', where $n$ is the network size.} 
The current paper is intended as a  step in expanding the study of ``which problems admit singularly optimal algorithms'' from the realm of synchronous $\mathcal{CONGEST}$ networks to that of \emph{asynchronous} ones.

An important example of a problem that has been studied in the context of singularly (near) optimal algorithms is 
minimum-weight spanning tree (MST)
construction. 
This has become a rather canonical problem in the sub area of distributed graph algorithms and was used to demonstrate and study various concepts such as the congested clique model (Lotker et al.~\cite{LotkerPPP05}), proof labeling schemes (Korman et al.~\cite{KormanKP05}), networks with latency and capacity (Augustine et al.~\cite{augustine2020latency}), cognitive radio networks (Rohilla et al.~\cite{rohilla2020efficient}),  
distributed applications of graph sketches (King et al.~\cite{KingKT15}), 
distributed computing with advice (Fraigniaud et al.~\cite{fraigniaud2010local}), 
distributed verification and hardness of approximation (Kor et al.~\cite{KorKP11}, Korman and Kutten~\cite{korman2007distributed} and
Das Sarma et al.~\cite{stoc11}),
self-stabilizing algorithms (Gupta and Srimani~\cite{gupta2003self} and many other papers), distributed quantum computing
(Elkin et al.~\cite{quantum}) and more.
The study of the MST problem in what we now call the $\mathcal{CONGEST}$ model
started more than forty years ago, see Dalal, and also Spira~\cite{dalal-mst1,dalal-mst,spira1977communication}. 

The seminal paper of Gallager, Humblet, and Spira (GHS)~\cite{GallagerHS83}  presented a
distributed algorithm for an \emph{asynchronous} network that constructs an MST in $O(n \log n)$ time using $O(m + n \log n)$ messages,
where $n$ and $m$ denote the number of nodes and the number of edges of the network, respectively.
The time complexity was later improved by 
 Awerbuch and by Faloutsos and Moelle to $O(n)$~\cite{awerbuch-optimal,faloutsos}, while keeping the same order of message complexity.

The message complexity of GHS  algorithm is (essentially) optimal, since it can be shown that for any 
$1 \leq m \leq n^2$, there exists a graph with $\Theta(m)$ edges such that $\Omega(m)$
is a lower bound on the message complexity of constructing even a spanning tree (even for randomized algorithms)~\cite{jacm15}.\footnote{This message lower bound holds in the so-called $KT_0$ model, which is assumed in this paper. See Section \ref{subsec:related-work} for more details.}
Moreover, the time complexity bound of $O(n)$ bound is {\em existentially} optimal (in the sense that there exist graphs (of high diameter) for which this is the best possible).
However, the time bound is not optimal if one parameterizes the running time in terms of the network diameter $D$, which can be much smaller than $n$.
In a \emph{synchronous} network, Garay, Kutten, and Peleg~\cite{garay-sublinear} gave the
first such distributed algorithm for the MST problem with running
time $\tilde{O}(D + n^{0.614})$, which was later improved by Kutten and
Peleg~\cite{kutten-domset} to $\tilde{O}(D + \sqrt{n})$ (again for a synchronous network).
However, both these algorithms
are not message-optimal
as they exchange $O(m + n^{1.614})$ and $O(m + n^{1.5})$ messages, respectively. 

Conversely, it was established by Peleg and Rubinovich~\cite{peleg-bound} that $\tilde{\Omega}(D+ \sqrt{n})$ is a lower
bound 
on the time complexity of distributed MST construction that applies even to low-diameter networks ($D = \Omega(\log n)$), and to the synchronous setting.  
The lower bound of Peleg and Rubinovich applies to exact, deterministic algorithms.
This lower bound was further extended to randomized (Monte Carlo) algorithms, approximate constructions, MST verification, and more (see~\cite{LotkerPP01,
LotkerPPP05,elkin,stoc11}).

Pandurangan, Robinson and Scquizzato 
\cite{PanduranganRS17,pandurangan2019time}  
showed that MST admits a randomized singularly near optimal algorithm
in \emph{synchronous} $\mathcal{CONGEST}$ networks; their algorithm  uses $\tilde{O}(m)$ messages and $\tilde{O}(D+\sqrt{n})$ rounds. Subsequently, Elkin~\cite{Elkin17} presented
a simpler, singularly optimal deterministic MST algorithm, again for synchronous networks.

For \emph{asynchronous} networks,  one can obtain algorithms that are separately time optimal (by combining~\cite{kutten-domset} with a synchronizer, see Awerbuch~\cite{awerbuch1985complexity}) or message optimal~\cite{GallagerHS83} for the MST problem, but it is open  whether one can obtain an asynchronous distributed MST algorithm that is
singularly (near) optimal. 
This is one of the main motivations for this work. 
An additional motivation 
is to design tools that can be useful for  constructing singularly  optimal algorithms for other fundamental problems in asynchronous networks.

In general, designing singularly optimal algorithms for \emph{asynchronous} networks seems harder compared to synchronous networks.
In \emph{synchronous} networks, besides MST construction, singularly (near) optimal algorithms have been shown in recent years for leader election,
 (approximate) shortest paths, and several other 
 problems~\cite{jacm15,haeupler2018round}. However, all these results  \emph{do not} apply to asynchronous networks. 
 Converting synchronous algorithms to work on asynchronous networks generally 
incur heavy cost overhead, increasing either time or message complexity or both substantially. In particular, using \emph{synchronizers} 
\cite{awerbuch1985complexity} to convert a singularly optimal algorithm to work in an asynchronous network generally renders
the asynchronous algorithm not singularly optimal.  
Using a synchronizer can significantly increase either the time or the message complexity or both far beyond the complexities of the algorithm presented here. Furthermore, there can be a non-trivial
cost associated with \emph{constructing}
such a synchronizer in the first place.

For example, applying the simple $\alpha$ synchronizer~\cite{awerbuch1985complexity} (which does not require the a priori existence of a leader or a spanning tree)
to the singularly optimal synchronous MST algorithm of~\cite{PanduranganRS17,pandurangan2019time} or~\cite{Elkin17} yields an asynchronous algorithm
with message complexity of $\tilde{O}(m(D+\sqrt{n}))$ and  time complexity of $\tilde{O}(D+\sqrt{n})$; 
this algorithm is time optimal, but \emph{not} message optimal.  
Some other synchronizers (see, e.g., Awerbuch and Peleg~\cite{AP90}), do construct efficient synchronizers that can achieve near optimal conversion from 
synchronous to asynchronous algorithms with respect to both time and messages, but  constructing the synchronizer itself 
requires a substantial preprocessing or initialization  cost.
For example, the message cost of the synchronizer setup protocol of~\cite{AP90} can be as high as $O(mn)$. 

Another rather tempting idea to derive an MST algorithm that would be efficient both in time and in messages would be to convert a result of Mashreghi and King~\cite{KMDISC19} (see also~\cite{KMDISC18} and discussion in Section \ref{subsec:related-work}), originally designed in the asynchronous $KT_1$ $\mathcal{CONGEST}$ model\footnote{In $KT_1$ model it is assumed that  nodes know the identities of their neighbors (cf. Section \ref{subsec:related-work}), unlike the $KT_0$ model, where nodes don't have that knowledge.} to the 
more common $KT_0$ model assumed here. In particular, they give an asynchronous MST algorithm that takes
$O(n)$ time and $\tilde{O}(n^{1.5})$ messages. Note that one can convert an algorithm in the $KT_1$ model to work in the $KT_0$ model by allowing each node to communicate with
all its neighbors in one round; this takes an additional $\tilde{O}(m)$ messages. 
Hence, with such a conversion  the message complexity of the above algorithm would be essentially optimal (i.e., $\tilde{O}(m)$), but the time complexity would be $O(n)$ which is only existentially optimal, and can be significantly higher than
the lower bound of $\tilde{O}(D+\sqrt{n})$. In fact, as we will discuss later, our result answers an open question
posed in~\cite{KMDISC19} and gives  MST algorithms with improved bounds in asynchronous $KT_1$ model (cf. Section
\ref{subsec:our-contributions}).

Instead of using a synchronizer, a better approach might be to design an algorithm directly for an asynchronous network.
As an example, consider the fundamental leader election problem, which is  simpler than the MST construction
problem. 
Till recently, a singularly optimal asynchronous leader election algorithm was not known. Applying a synchronizer to  known
\emph{synchronous} singularly optimal leader election algorithms {\em does not} yield singularly optimal asynchronous algorithms. For example, applying the simple $\alpha$ synchronizer 
to the singularly optimal synchronous leader election algorithm of~\cite{jacm15} yields an asynchronous algorithm with message complexity of $O(mD\log n)$
and  time complexity of $O(D)$; this algorithm is not message optimal, especially for large diameter networks.
Other synchronizers such as $\beta$ and $\gamma$ of~\cite{awerbuch1985complexity} and that of~\cite{AP90}, require the a priori existence of a \emph{leader} or \emph{a spanning  tree} and 
hence  cannot be used for leader election. 
The work of Kutten et al.~\cite{KMPP21} presented a singularly (near) optimal leader election
 for asynchronous networks that 
  takes  $\Tilde{O}(m)$ messages and $\Tilde{O}(D)$ time.\footnote{This algorithm is singularly near  optimal, since 
  $\Omega(m)$ and $\Omega(D)$ are message and lower bounds for leader election even for randomized Monte Carlo algorithms~\cite{jacm15}.}
  That algorithm did not use a synchronizer and was directly designed for an asynchronous network. The leader election algorithm
  of~\cite{KMPP21} is a useful subroutine in our MST algorithm.

\subsection{The Distributed Computing Model}
\label{subsec:model}

The distributed network is modeled  as an arbitrary undirected connected weighted graph $G=(V,E,w)$, 
where the node set $V$ represent the processors, the edge set $E$ represents the communication
links between them, and $w(e)$ is the weight of edge $e \in E$.  $D$ denotes the hop-diameter (that is, the unweighted
diameter) of $G$,  in this paper, diameter always means hop-diameter.
We also assume that the weights of the edges of
the graph are all distinct. This implies that the MST of the graph is unique.
(The definitions and the results generalize readily to the case where the weights are not necessarily distinct.)
We make the common
assumption that each node has a unique identity (this is not essential, but
simplifies presentation), and at the beginning of computation, each
node $v$ accepts as input its own identity number (ID) and the weights
of the edges incident to it. 
Thus, a node has only {\em local}
knowledge. We assume that each node has ports (each port having a unique port number); each incident edge is connected
to one distinct port. A node \emph{does not} have any initial knowledge of the other endpoint  of its incident edge (the identity of the node it is connected to or the port number that it is connected to). This model is  referred to as the {\em clean network model} in~\cite{peleg-locality} and is also sometimes referred to as the
$KT_0$ model, i.e., the initial (K)nowledge of
all nodes is restricted (T)ill radius 0 (i.e., just the local knowledge)~\cite{peleg-locality}.
The $KT_0$ model is extensively used in distributed computing literature including MST algorithms
(see e.g.,~\cite{peleg-locality,PanduranganRS18} and the references therein). While we design an algorithm
for the $KT_0$ model, our algorithm also yields an improvement in the $KT_1$ model~\cite{AGVP90,peleg-locality} where each node has an initial knowledge of the identities of its neighbors. 

We assume that nodes have knowledge of $n$ (in fact a constant factor approximation of $n$ is sufficient), the network size.  We note that quite a few prior distributed algorithms require knowledge of
$n$, see e.g.~\cite{A89,SS94,AM94,KMPP21}. 
We assume that  processors can access \emph{private unbiased random bits}. 

We assume the standard \emph{asynchronous} $\mathcal{CONGEST}$ communication
model~\cite{peleg-locality},
where messages (each message is of $O(\log n)$ bits) sent over an edge incur unpredictable 
but finite delays, in an error-free and FIFO manner (i.e., messages will arrive in sequence).
As is standard, it is assumed that a message 
takes \emph{at most one time unit} to be delivered across an edge. Note that this is just for the sake of the analysis of time complexity,
and does not imply that nodes know an upper bound on the delay of any message.
As usual, local computation within a node is assumed to be instantaneous and free; however, our algorithm will involve  only lightweight local computations. 

We assume an \emph{adversarial wake-up} model, where node wake-up times are scheduled by an adversary (who may decide to keep some nodes dormant)
which is standard in prior asynchronous protocols (see~\cite{AG91,GallagerHS83,singh97}). Nodes are initially asleep, and a node enters the execution when it is woken up by 
the environment 
or upon receiving messages from other nodes.\footnote{Although standard, the  adversarial wake up model, in our setting,  is not more difficult
compared to the alternative \emph{simultaneous wake up} model where all nodes are assumed to be awake at the beginning of the computation. Indeed, in the adversarial wake up model,  awake nodes can broadcast (by simply flooding)
a ``wake up'' message which can wake up all nodes; this takes  only $O(m)$ messages and $O(D)$ time and hence within the singularly optimal bounds.}

The time complexity is measured from the moment the first node wakes up.
The adversary  wakes up nodes and  
 delays each message in an \emph{adaptive} fashion, i.e., when the adversary makes a decision to wake up a node or delay a message, 
 it has access to the results of all previous coin flips. 
In the asynchronous setting, once a node enters execution, it performs all the computations required of it by the algorithm, 
and sends out messages to neighbors as specified by the algorithm.
At the end of the computation, we require each node to know which of its incident edges belong to the MST. When we say that an algorithm has termination detection, we mean that all nodes detect termination, i.e., each node detects that its own participation in the algorithm is over.

\subsection{Our Contributions}
\label{subsec:our-contributions}

\noindent {\bf Almost Singularly Optimal Asynchronous MST Algorithm.} Our main contribution is
 a randomized distributed MST  algorithm  that, with high probability, computes an MST in \emph{asynchronous}  $\mathcal{CONGEST}$ networks   and  takes $\tilde{O}(D^{1+\epsilon} + \sqrt{n})$ time and $\tilde{O}(m)$ messages,
 where  $n$ is the number of  nodes, $m$ the number of edges,  $D$ is the diameter of the network, and $\epsilon >0$ is an arbitrarily  small constant (both time and message bounds hold with high probability) (cf. Theorem \ref{the:mst-alg}). 
 Since $\tilde{\Omega}(D+\sqrt{n})$ and $\Omega(m)$ are respective time and message lower bounds for
 distributed MST construction in the $KT_0$ model, our algorithm  is message optimal (up to a $\polylog(n)$ factor) and almost time optimal (except for a $\tilde{O}(D^{\epsilon})$ factor).

\medskip

 \noindent {\bf Asynchronous MST in $KT_1$ in Sublinear Messages and Time.}
  Our result   answers an open problem raised in Mashregi and King~\cite{KMDISC19} (see also~\cite{MK21,KMDISC18}). They ask
   if there exists an asynchronous MST algorithm that takes sublinear
time if the diameter of the network is low, and has $\tilde{O}(m)$ message complexity. They remark that if such an algorithm exists, then it would improve their result giving better bounds for asynchronous MST in $KT_1$ $\mathcal{CONGEST}$.
 Our result answers their question in the affirmative 
  by giving the first known  asynchronous MST algorithm that has sublinear time (for all $D = O(n^{1-\delta})$, where $\delta > 0$ is an arbitrarily small constant) and uses $\tilde{O}(m)$ messages. Furthermore, as indicated in  Mashregi and King~\cite{KMDISC19}, this also yields the first asynchronous MST algorithm that is \emph{sublinear} in 
 \emph{both} time and messages in the
 $KT_1$ $\mathcal{CONGEST}$ model. More precisely, plugging our asynchronous MST algorithm in the result of
~\cite{KMDISC19}([Theorem 1.2]) gives an asynchronous MST algorithm that takes $\tilde{O}(D^{1+\epsilon}+ n^{1-2\delta})$ time 
 and $\tilde{O}(n^{3/2+\delta})$ messages  for any small constant $\epsilon > 0$ and for any $\delta \in [0,0.25]$ (cf. Theorem \ref{the:mst-alg-kt1}). 
 This  gives a tradeoff result between time and messages. In particular, setting $\delta = 0.25$ yields an asynchronous MST algorithm that has (almost optimal) time complexity
$\tilde{O}(D^{1+\epsilon}+ \sqrt{n})$ and message complexity $\tilde{O}(n^{7/4})$.

\medskip
 
 \noindent {\bf  Low Diameter Spanning Tree Construction.}
A key tool in our algorithm is the construction of a low diameter rooted spanning tree in asynchronous $\mathcal{CONGEST}$ 
 that has depth $\tilde{O}(D^{1+\epsilon})$ (for an arbitrarily small constant $\epsilon > 0$) in time $\tilde{O}(D^{1+\epsilon})$ time  and $\tilde{O}(m)$ messages.
 To the best of our knowledge, this is the first such construction that is almost singularly optimal in the asynchronous setting. This tree construction is of independent interest as it can also be used for {\em efficiently} (under both time and messages) performing tasks such as  upcast and downcast which are very common tools in distributed algorithms (these are described, for completeness, in Section \ref{sec:toolbox}). Informally, an upcast (using the tree) provides a feedback  (i.e., verification) to the broadcast (downcast) initiator such that  (1) the broadcast initiator knows when the broadcast terminates (based on
 acknowledgements from all nodes) and (2) the initiator can get compute a value based on the inputs of all the nodes (e.g., their sum). This verified broadcast is crucial in the asynchronous setting that allows the initiator to know
 when the broadcast has reached all nodes and thereafter proceed to the next step of the computation.
 
 We note that one could have used a BFS tree instead of a low-diameter tree. However,
 the best known BFS tree construction in the asynchronous setting is due to Awerbuch~\cite{A89} which takes $O(D^{1+\epsilon})$ time and $O(m^{1+\epsilon})$ messages (for arbitrarily small
 constant $\epsilon > 0$). This algorithm (which is deterministic) is not message optimal, unlike ours, and hence will only yield an MST algorithm with $O(m^{1+\epsilon})$ message complexity.
 Furthermore,  though our algorithm  does not compute a BFS  (but it is sufficient for MST purposes) and  is randomized, it is significantly simpler to understand and prove correctness for when compared to Awerbuch's algorithm. We also note that apart from the leader election and spanning tree primitives, the rest of the MST algorithm is deterministic.

\subsection{Additional Related Work}
\label{subsec:related-work}

The distributed MST problem has been studied intensively for the last four decades and there are several results known
in the literature, including several recent results, both for synchronous and asynchronous networks (including the ones mentioned in Section \ref{sec:intro}), see e.g., 
\cite{Elkin17,elkin-faster,PanduranganRS18,low-congestion-mst,haeupler2018round,universal-optimality-mst,MK21,MashreghiK17,
PanduranganRS17,pandurangan2019time} and the references therein.

We note that the results of this paper
and that of leader election of~\cite{KMPP21} (for asynchronous networks) as well as those of~\cite{PanduranganRS17,pandurangan2019time} and~\cite{Elkin17} (for synchronous networks)
assume the so-called \emph{clean network model}, a.k.a.\ $KT_0$~\cite{peleg-locality} (see Section \ref{subsec:model}), where nodes
do not have initial knowledge of the identity of their neighbors.
But the optimality of above results does not in general apply to the $KT_1$  model, where nodes have initial knowledge
of the identities of their neighbors.
It is clear that for time complexity by itself, the distinction between $KT_0$ and $KT_1$ does not matter (as one can simulate $KT_1$ in $KT_0$ in 
one round/time unit by each node
sending its ID to all its neighbors)
but it is significant when considering message complexity (as the just mentioned simulation costs $\Theta(m)$ messages).
Awerbuch et al.~\cite{AGVP90} show that $\Omega(m)$ is a message lower bound for broadcast (and hence for  construction of a spanning tree as well) in the $KT_1$ model,  
if one allows only (possibly randomized Monte Carlo) comparison-based algorithms, i.e., algorithms that can operate on IDs only by comparing them.
(We note that all  algorithms mentioned earlier in this subsection are comparison-based, including ours.)

On the other hand, for \emph{randomized non-comparison-based} algorithms, the message lower bound of $\Omega(m)$ does not apply in the $KT_1$ model. 
King et al.~\cite{KingKT15} presented a randomized, non-comparison-based Monte
Carlo algorithm in the $KT_1$ model for
MST construction in $\tilde{O}(n)$ messages ($\Omega(n)$ is a message lower bound) 
(see also~\cite{MashreghiK17}).
While this algorithm 
achieves $o(m)$ message complexity
(when $m = \omega(n \polylog n)$), it is \emph{not} time-optimal, as it takes time $\tilde{O}(n)$ rather than $\tilde{O}(D+\sqrt{n})$.
Algorithms with improved round complexity but worse message complexity, and more generally, trade-offs between time and messages, are shown in~\cite{GhaffariK18,gmyr}. We note that all these results are for \emph{synchronous} networks. 
As discussed  in Section \ref{sec:intro}, the works of~\cite{KMDISC19,KMDISC18,MK21} address asynchronous MST construction in $KT_1$ model and present algorithms that take $o(m)$ messages.

\section{Toolbox}
\label{sec:toolbox}
In this section, we present several procedures that are used as blackboxes in the current paper. As these procedures are either from other papers or minor variations of those in other papers, we merely mention what they do and their guarantees here. 
\paragraph*{Synchronization}
Synchronizers are mechanisms that allow nodes to run synchronous algorithms in an asynchronous network with some overhead, either in time or messages.

\textit{$\alpha$-synchronizer.}
An $alpha$-synchronizer, presented by Awerbuch~\cite{awerbuch1985complexity}, is a well known mechanism for nodes to run synchronous algorithms in an asynchronous network in the same running time (with a diameter overhead to time) while suffering a message overhead equivalent to the product of the run time of the synchronous algorithm and $O(m)$. 
Informally, when simulating some synchronous algorithm Alg, each node $v$ sends a ``pulse'' message to all its neighbors after all of $v$'s messages in the current round of Alg were acknowledged.  Thus, $v$'s neighbors can keep track of which pulse, or ``clock tick'', $v$ has simulated. Additionally, note that it takes $O(D)$ time to initialize the $\alpha$-synchronizer. 
A good description appears also in \cite{peleg-locality}. We know the following about an $\alpha$-synchronizer.

\begin{lemma}[Adapted from~\cite{peleg-locality}]
\label{lem:alpha-synchronizer}
Consider a graph $G$ with $n$ nodes, $m$ edges, and diameter $D$ in an asynchronous setting. The nodes of the graph may simulate a synchronous algorithm that takes $O(T)$ rounds and $O(M)$ messages in the synchronous setting by utilizing an $\alpha$-synchronizer. The resulting simulated algorithm takes $O(T+D)$ time and $O(M + Tm)$ messages and has termination detection.
\end{lemma}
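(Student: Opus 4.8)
The plan is to follow the classical construction and analysis of the $\alpha$-synchronizer (Awerbuch~\cite{awerbuch1985complexity}; see the exposition in~\cite{peleg-locality}), adapting the bookkeeping to our adversarial wake-up model. First I would handle initialization: since nodes wake up at adversarial times, the node(s) awoken first flood a \textsc{Wakeup} message through $G$, which reaches every node in $O(D)$ time using $O(m)$ messages and gets everyone to agree to start pulse $0$ (equivalently, the synchronizer's own round-$0$ and $\langle\textsc{safe},0\rangle$ messages already wake every node within its hop-distance to the nearest initiator). This start-up contributes an additive $O(D)$ to the time and $O(m)$ to the messages. From then on, each node $v$ maintains a local pulse counter $p_v$, and in pulse $p$ it performs exactly the computation and message-sending prescribed for round $p$ of the synchronous algorithm $\mathrm{Alg}$. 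Every message of $\mathrm{Alg}$ is acknowledged by its recipient; $v$ becomes \emph{safe} for pulse $p$ once all of its pulse-$p$ messages have been acknowledged, at which point it sends $\langle\textsc{safe},p\rangle$ to all neighbors. When $v$ is safe for pulse $p$ and has received $\langle\textsc{safe},p\rangle$ from every neighbor, it advances $p_v$ to $p+1$ and begins simulating the next round.

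Next I would establish correctness by induction on $p$: when a node $v$ enters pulse $p+1$, it has already received every message $\mathrm{Alg}$ sends to $v$ in round $p$. Indeed, $v$ enters pulse $p+1$ only after receiving $\langle\textsc{safe},p\rangle$ from each neighbor $u$, and $u$ sends that message only after each pulse-$p$ message it sent --- in particular any message addressed to $v$ --- has been acknowledged, hence delivered; the FIFO assumption guarantees this pulse-$p$ message is not overtaken by a later one. By the induction hypothesis all nodes correctly simulated rounds $1,\dots,p$, so the messages $v$ receives are exactly those of round $p$, and $v$'s pulse-$(p+1)$ computation coincides with round $p+1$ of $\mathrm{Alg}$. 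Hence the asynchronous execution reproduces the synchronous one faithfully, and in particular yields the same output at every node.

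For the complexity bounds, note that a single pulse consists of a constant number of message ``hops'' across any edge: the round's messages, their acknowledgments, and the $\langle\textsc{safe},p\rangle$ messages; since each crosses an edge in at most one time unit, pulse $p$ finishes within $O(1)$ time after pulse $p-1$ at any node, so all $O(T)$ pulses complete in $O(T)$ time, for $O(T+D)$ total. For messages, $\mathrm{Alg}$ itself contributes $O(M)$, their acknowledgments another $O(M)$, and the $\langle\textsc{safe},p\rangle$ messages contribute $O(m)$ per pulse, i.e.\ $O(Tm)$ overall; with the $O(m)$ start-up cost this is $O(M+Tm)$. Termination detection follows since each node locally knows when it has finished the last round in which it participates (part of $\mathrm{Alg}$ having termination detection in the synchronous model); a node that has locally terminated still answers the safe-protocol of its neighbors but otherwise stops, so once the final pulse is globally over every node has detected its own termination. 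The main obstacle is making the timing argument fully rigorous against an \emph{adaptive} adversary --- showing that no reordering of deliveries can stall a pulse for more than $O(1)$ time --- together with the boundary case where a node has terminated but a neighbor still needs its $\langle\textsc{safe},\cdot\rangle$ messages; both are standard but require the small amount of extra bookkeeping described above.
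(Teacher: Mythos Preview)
The paper does not actually prove this lemma; it states it as a known result adapted from~\cite{peleg-locality}, preceded only by a brief informal description of the pulse/acknowledgment mechanism and the remark that initialization costs $O(D)$ time. Your proposal is a correct expansion of exactly that classical construction and analysis, so it is essentially the same approach the paper defers to the cited references.
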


\textit{$\beta$-synchronizer.}
A $\beta$-synchronizer is another type of synchronizer that reduces the message overhead at the expense of time. An assumption is made that there exists a spanning tree $\mathcal{T}$, rooted at some node $\mathcal{L}$, of depth $d$ overlaid on top of the original graph and that each node knows its parent and children in the tree, if any. Now, as with the $\alpha$-synchronizer, a synchronous algorithm that takes $O(T)$ rounds and $O(M)$ messages may be simulated in an asychronous network with the help of pulses. However, here each node sends a pulse to its parent once the current round is done and it has received pulses from each of its children in the tree. Once the root receives the pulse and finishes the current round, it broadcasts a message to move to the next round along the tree. The resulting simulated algorithm takes $O(T\cdot d)$ time and $O(M + Tn)$ messages.

\begin{lemma}[Adapted from~\cite{peleg-locality}]
\label{lem:beta-synchronizer}
Consider a graph $G$ with $n$ nodes in an asynchronous setting. Assume that there exists a rooted spanning tree $\mathcal{T}$ of depth $d$ overlaid on $G$ such that each node knows its parent and children, if any, in the tree. The nodes of the graph may simulate a synchronous algorithm that takes $O(T)$ rounds and $O(M)$ messages in the synchronous setting by utilizing a $\beta$-synchronizer over $\mathcal{T}$. The resulting simulated algorithm takes $O(T \cdot d)$ time and $O(M + Tn)$ messages and has termination detection.
\end{lemma}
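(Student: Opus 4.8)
The plan is to prove the three assertions (time bound, message bound, termination detection) together by making the pulse mechanism sketched above precise and then establishing, by induction on the round number $r$, the key \emph{synchrony invariant}: in the simulated execution, no node ever receives a round-$r$ message of the synchronous algorithm $\mathrm{Alg}$ before it has itself entered round $r$. Once this invariant is in place, the simulated run is a faithful synchronous execution of $\mathrm{Alg}$, and the complexity bounds follow by charging costs per round.

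First I would pin down the protocol. The root $\mathcal{L}$ initiates round $1$ by sending a $\langle \mathrm{pulse}, 1\rangle$ message to its children. A non-root node, upon first receiving $\langle \mathrm{pulse}, r\rangle$ from its parent, forwards it to all its children, then executes round $r$ of $\mathrm{Alg}$: it sends the prescribed $\mathrm{Alg}$-messages over its graph edges and awaits an explicit acknowledgement on each. A node becomes \emph{locally complete for round $r$} once (i) it has received acknowledgements for all $\mathrm{Alg}$-messages it sent in round $r$ (vacuously, if it sent none) and (ii) it has received a $\langle \mathrm{safe}, r\rangle$ signal from every child; it then sends $\langle \mathrm{safe}, r\rangle$ to its parent. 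When $\mathcal{L}$ is locally complete for round $r$, it starts round $r+1$ with a fresh pulse. The induction step is then: assuming every node enters round $r$ only after the round-$r$ pulse reaches it, $\mathcal{L}$ issues the round-$(r+1)$ pulse only after receiving $\langle \mathrm{safe}, r\rangle$ from all children, which — unwinding the convergecast bottom-up over $\mathcal{T}$ — happens only after \emph{every} node has become locally complete for round $r$; in particular every round-$r$ $\mathrm{Alg}$-message has been acknowledged, hence (channels being FIFO and delays finite) actually delivered, before any round-$(r+1)$ pulse is sent. Consequently, when a node finally receives the round-$(r+1)$ pulse and starts round $r+1$, all round-$r$ messages destined for it are already in hand. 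A short companion argument, using FIFO together with the fact that a node emits its round-$(r+1)$ messages only after receiving the round-$(r+1)$ pulse, rules out any round-$(r+1)$ $\mathrm{Alg}$-message overtaking the pulse, which closes the induction.

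For the bounds I would charge per round. \emph{Time:} the downward pulse broadcast costs $O(d)$ time; executing round $r$ (an $\mathrm{Alg}$-message plus its acknowledgement, each over one edge) costs $O(1)$ time by the at-most-one-time-unit convention; the upward convergecast of $\langle \mathrm{safe}, r\rangle$ costs $O(d)$ time; summing over the $O(T)$ rounds gives $O(T\cdot d)$. \emph{Messages:} $\mathrm{Alg}$ itself contributes $O(M)$ messages and their acknowledgements another $O(M)$; each round uses every edge of $\mathcal{T}$ a constant number of times for pulse/safe, i.e. $O(n)$ per round and $O(Tn)$ overall; the total is $O(M+Tn)$. For termination detection I would append, after $\mathrm{Alg}$ has locally terminated at every node, a bottom-up ``done'' convergecast (a node reports done once it has locally finished $\mathrm{Alg}$ and all its children have reported done) followed by a top-down ``halt'' broadcast; this adds only $O(d)$ time and $O(n)$ messages and lets every node learn that its participation is over.

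I expect the main obstacle to be stating and handling the synchrony invariant cleanly against an adaptive adversary — in particular making the FIFO-plus-acknowledgement argument airtight, so that no ``future'' $\mathrm{Alg}$-message can race ahead of the pulse, and correctly treating rounds in which a node sends nothing (so local completeness is reached vacuously rather than by waiting for an acknowledgement that never arrives). The adversarial wake-up issue is minor here, since $\mathcal{L}$'s initial pulse broadcast wakes every node, so I would only remark on it in passing.
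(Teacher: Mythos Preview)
The paper does not actually prove this lemma; it is stated as adapted from~\cite{peleg-locality}, with only the informal paragraph preceding it serving as justification. Your write-up is therefore considerably more detailed than anything in the paper, and the overall structure---pulse broadcast, execute round with acknowledgements, safe-convergecast, repeat---is exactly the standard $\beta$-synchronizer mechanism the paper has in mind.

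That said, your stated synchrony invariant is too strong and your ``companion argument'' does not establish it. You claim that no node receives a round-$r$ $\mathrm{Alg}$-message before it has itself entered round $r$, and you justify this via FIFO. But the round-$(r{+}1)$ pulse reaches a node $v$ along tree edges, whereas a round-$(r{+}1)$ $\mathrm{Alg}$-message from some neighbor $u$ travels along the graph edge $(u,v)$; these are different channels, so FIFO gives you nothing. Concretely, if $u$ sits near the root of $\mathcal{T}$ and $v$ sits near a leaf, $u$ may receive the round-$(r{+}1)$ pulse, send its round-$(r{+}1)$ $\mathrm{Alg}$-message to $v$ over $(u,v)$, and have it arrive at $v$ well before the pulse has percolated down to $v$.

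The fix is easy and does not disturb your complexity accounting: weaken the invariant to ``when a node enters round $r$, it has already received all round-$(r{-}1)$ $\mathrm{Alg}$-messages destined for it''---which is exactly what your acknowledgement-plus-convergecast argument already proves---and have each node buffer any $\mathrm{Alg}$-message tagged with a round number higher than its current one, processing it only upon entering that round. With buffering in place the simulation is faithful, and your time, message, and termination-detection analyses go through unchanged. Drop the FIFO sentence entirely.
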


Notice that both $\alpha$- and $\beta$-synchronizers can be used by nodes to enact a type of global round counter up to any number that can be encoded using $O(\log n)$ bits.


\paragraph*{Leader Election}
We make use of the leader election procedure, call it Procedure~$\LeaderElection$, of Kutten et al.~\cite{KMPP21} to elect a leader with high probability. Adapting Theorem~11 to this setting, we have the following lemma.
Note that in the course of the procedure, all nodes are woken up but such information was not mentioned in the theorem statement in~\cite{KMPP21}, so we add it here.

\begin{lemma}[Theorem~11 in~\cite{KMPP21}]
\label{lem:leader-election} 
Procedure~$\LeaderElection$ solves leader election with termination detection with high probability in any arbitrary graph with $n$ nodes, $m$ edges, and diameter $D$ in $O(D + \log^2 n)$ time with high probability using $O(m \log^2 n)$ messages with high probability in an asynchronous system with adversarial node wake-up. At the end of the procedure, all nodes are awake.
\end{lemma}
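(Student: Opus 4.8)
The lemma packages together two kinds of claims, and I would prove them separately. The quantitative and correctness guarantees --- that Procedure~$\LeaderElection$ elects a unique leader with high probability, has termination detection, runs in $O(D+\log^2 n)$ time, and sends $O(m\log^2 n)$ messages, all in the asynchronous $\mathcal{CONGEST}$ model under adaptive adversarial wake-up and scheduling --- are precisely Theorem~11 of~\cite{KMPP21}. Our model assumptions ($KT_0$, FIFO links, private random bits, knowledge of $n$ up to a constant factor, time measured from the first wake-up) match theirs, so this part follows by importing that theorem and merely restating it in the present notation; no new argument is required.

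The one genuine addition is the claim that, when the procedure ends, every node of $G$ is awake. I would justify it by inspecting the algorithm of~\cite{KMPP21} rather than treating it as a pure black box, using two complementary observations. First, termination detection in our sense requires \emph{every} node to detect that its own participation is over; a node that the adversary kept dormant throughout cannot detect anything, so the algorithm itself must have sent it a message --- i.e., woken it --- at some point, and once woken a node stays awake until it terminates. Second, and more concretely, the algorithm of~\cite{KMPP21} ultimately has the surviving candidate grow a structure spanning $G$ and collect confirmations back along it before completion is declared; propagating or confirming along an edge requires the receiving endpoint to have entered the execution, which is exactly being woken up. Either observation yields that all nodes are awake by the time all of them detect termination.

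If one prefers to avoid reasoning about the internals of~\cite{KMPP21}, there is a robust fallback: compose Procedure~$\LeaderElection$ with an explicit wake-up flood, in which some awake node floods a ``wake-up'' token through $G$. As already noted in the model discussion, this costs only $O(m)$ messages and $O(D)$ time and has trivial termination detection (acknowledgments along the flood tree), so it is subsumed by the stated bounds and changes neither correctness nor termination detection. This guarantees all nodes are awake regardless of what~\cite{KMPP21} does internally.

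I expect the only delicate point to be the first route's assertion that the termination criterion of~\cite{KMPP21} cannot be met while some node is still dormant or stuck in an orphaned sub-computation; making this airtight against the asynchronous, adaptive adversary requires a careful reading of how that algorithm certifies global completion. The cleanest way to finish is therefore to fall back on the explicit wake-up flood, which settles the extra claim at no asymptotic cost and sidesteps this obstacle entirely.
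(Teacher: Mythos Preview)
Your proposal is correct and matches the paper's approach: the paper does not prove this lemma at all but simply cites Theorem~11 of~\cite{KMPP21} and adds the parenthetical remark that ``in the course of the procedure, all nodes are woken up but such information was not mentioned in the theorem statement in~\cite{KMPP21}, so we add it here.'' Your treatment is in fact more thorough than the paper's, since you actually justify the wake-up claim (via termination detection semantics and the explicit flood fallback) rather than merely asserting it.
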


\paragraph*{Operations on a Fragment}
In the course of our algorithm, we reach a situation where the graph $G$ is partitioned into a set of disjoint trees 
(called fragments), each with a distinct root, an associated fragment ID, and an associated cluster ID (which may be different from its fragment ID). Each node knows its parent and children in the fragment, if any. We now describe some common operations that are to be performed on such trees.

Consider a tree $T$ spanning a subset of the nodes of $G$, oriented towards a distinct root $R$. Let the tree have fragment ID $F$, known to all nodes in $T$. Furthermore, all nodes of $T$ have the same cluster ID, say $C$, which may or may not be equal to $F$. Let $size(T)$ and
$depth(T)$ denote the number of vertices and the depth of $T$, respectively.

\textit{Broadcast on a Fragment.}
Suppose a message $M$, originating at the root $R$, must be distributed to all nodes of the tree.
Procedure~$\FragmentBroadcast$ performs this operation in a straightforward manner. 
The root $R$ sends $M$ to all its neighbors. Intermediate nodes receiving $M$ on some round forward it to all their children in $T$ in the next round. 

To ensure termination detection, the procedure then performs a {\em convergecast} of acknowledgements on $T$ as follows. Each leaf, upon receiving $M$, sends back an ``ack'' message.
Each intermediate node waits until it receives an ``ack'' from all its children, and then sends an ``ack'' to its parent. The operation terminates once the root receives an ``ack'' from all its children. 

\begin{lemma}
\label{lem:frag-broadcast}
Procedure~$\FragmentBroadcast$, run by nodes in the tree $T$, performs broadcast of a message originating at the root of~ $T$ with termination detection in $O(depth(T))$ time and $O(size(T))$ messages.
\end{lemma}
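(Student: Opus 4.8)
The plan is to analyze the two phases of Procedure~$\FragmentBroadcast$ separately: the downward broadcast of $M$ and the upward convergecast of acknowledgements. For the first phase, I would argue by induction on the depth of a node in $T$. The root $R$ sends $M$ to all its children at time (at most) the start of the procedure; a node at depth $k$ receives $M$ at most one time unit after its parent (at depth $k-1$) does, since each message traverses an edge in at most one time unit. Hence every node at depth $k$ has received $M$ by time $k$, and the downward phase completes — i.e., all nodes have $M$ — within $O(depth(T))$ time. For the message count, each node forwards $M$ exactly once to each of its children, so the total number of $M$-messages is exactly $\sum_{v \in T} (\text{number of children of } v) = size(T) - 1$, which is $O(size(T))$.

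For the second phase (the convergecast of ``ack'' messages), I would symmetrically induct on the \emph{height} of a node, i.e., its distance to the farthest leaf in its subtree. A leaf sends its ``ack'' immediately upon receiving $M$; an internal node sends its ``ack'' only after all its children have done so, and each such ``ack'' takes at most one time unit to arrive. Thus an internal node of height $h$ sends its ``ack'' at most $h$ time units after it received $M$. Since every node receives $M$ within $O(depth(T))$ time and the height of any node is at most $depth(T)$, the root receives ``ack'' from all children within $O(depth(T))$ time of the procedure's start, so the procedure terminates — with the root detecting termination — in $O(depth(T))$ time overall. Each node sends exactly one ``ack'' (to its parent), except $R$, which sends none, so the convergecast uses $size(T) - 1 = O(size(T))$ messages. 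Adding the two phases gives the claimed $O(depth(T))$ time and $O(size(T))$ messages.

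The only subtlety worth dwelling on is the interplay with asynchrony: because edge delays are arbitrary (but bounded by one time unit for the purpose of time analysis), a node cannot know \emph{when} its parent sent $M$, but the FIFO and finite-delay guarantees ensure each message is eventually delivered, and the ``at most one time unit'' convention makes the inductive time bounds go through. Termination detection is immediate from the protocol's structure: an internal node provably cannot send ``ack'' before all nodes in its subtree have received $M$ (again by induction on height, using that a child sends ``ack'' only after receiving $M$), so when $R$ collects all its children's acknowledgements it has a correct certificate that the broadcast has reached every node of $T$. I do not anticipate a real obstacle here; the argument is a routine double induction, and the main care is simply to state the two inductions cleanly and to note that the acknowledgement phase's correctness (not just its complexity) is what furnishes termination detection.
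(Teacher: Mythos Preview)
Your proof plan is correct and complete. Note, however, that the paper does not actually prove this lemma: it appears in the Toolbox section, which explicitly states that these procedures ``are either from other papers or minor variations of those in other papers'' and that the paper will ``merely mention what they do and their guarantees here.'' So there is no paper proof to compare against; your argument is the standard one and would serve perfectly well as the omitted justification.
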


\textit{Upcast on a Fragment.}
Suppose $k$ distinct and uncombinable messages, originating at arbitrary locations in the tree, must be gathered to the root $R$. 
Procedure~$\Upcast$ performs this operation in a straightforward manner. Each node in the tree pipelines the messages it has seen upwards in the tree (towards $R$), in some arbitrary order. 

 We assume that $R$ knows the number $k$ of such messages it expects to receive and ensure this is true everywhere the procedure is called. Thus, $R$ knows when it has received all $m$ messages. 
 To ensure termination detection, the procedure then performs $\FragmentBroadcast$.

\begin{lemma}
\label{lem:upcast}
Procedure~$\Upcast$, run by nodes in the tree $T$, performs upcasting of $k$ distinct messages with termination detection in $O(k + depth(T))$ time and $O(k \cdot depth(T))$ messages.
\end{lemma}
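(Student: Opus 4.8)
The plan is to analyze \Upcast{} in its two phases separately: the pipelined gathering of the $k$ messages to the root $R$, and the terminating \FragmentBroadcast{}.

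\emph{Correctness and termination detection.} Every node repeatedly relays to its parent each message it has received but not yet relayed; hence a message originating at a node $v$ travels along the unique $v$-to-$R$ path in $T$, which has at most $depth(T)$ edges, and by the finite-delay and FIFO assumptions it reaches $R$ after finitely many hops. Since by hypothesis $R$ knows the number $k$ of messages it should receive, it detects the moment all $k$ have arrived, and then runs \FragmentBroadcast{}; by Lemma~\ref{lem:frag-broadcast} this informs every node of $T$ that the procedure is complete, which gives termination detection.

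\emph{Time.} First I would establish the standard pipelining bound, namely that all $k$ messages reach $R$ within $O(k + depth(T))$ time. The one point that needs care is that each node forwards ``in some arbitrary order''; I would handle it by observing that on a tree, where each node relays at most one message per time unit towards its parent, two messages cannot swap relative positions once both are simultaneously held at a common ancestor, so there is a well-defined order in which the messages leave each node and this order is consistent along root-paths. Fixing such an order, a simple induction shows that the $i$-th message can be delayed only by the $i-1$ messages ahead of it and hence reaches $R$ by time roughly $depth(T) + i$ (using that a traversal costs at most one time unit and that sends on a channel are sequential); taking $i = k$ yields $O(k + depth(T))$. The terminating \FragmentBroadcast{} adds a further $O(depth(T))$ time, for a total of $O(k + depth(T))$.

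\emph{Messages.} Each of the $k$ messages is relayed at most once across each of the at most $depth(T)$ edges on its path to $R$, and each relay is a single $O(\log n)$-bit message, so the gathering phase sends $O(k \cdot depth(T))$ messages; the terminating \FragmentBroadcast{} contributes a further $O(size(T))$, which in the invocations we make is dominated by $O(k\cdot depth(T))$ (otherwise the bound reads $O(k\cdot depth(T)+size(T))$). The only genuinely non-routine step is the pipelining time bound under arbitrary local forwarding order; the no-overtaking argument sketched above is how I would make it rigorous, and everything else is bookkeeping.
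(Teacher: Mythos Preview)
The paper does not actually prove this lemma: it is presented in the Toolbox section as a standard fact, with only the description of the procedure (pipelined forwarding to the root, followed by a \FragmentBroadcast{} for termination detection) and no argument for the stated bounds. Your write-up is therefore not competing with any proof in the paper; it is supplying one where the paper simply appeals to folklore.

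Your argument is essentially the standard one and is correct in outline. Two remarks. First, the ``no-overtaking'' justification for the $O(k+depth(T))$ time bound is the right intuition, but in the asynchronous model (with adversarial delays up to one time unit per hop and one message per edge per unit) the clean way to make it rigorous is to argue that each message can be delayed at any node only by messages that will precede it at the root, so its total queueing delay along its root-path is at most $k-1$; combined with at most $depth(T)$ edge traversals this gives the bound. Your sketch says this, just be aware that the inductive ``$i$-th message reaches $R$ by time $depth(T)+i$'' phrasing is really a synchronous statement and needs the delay-charging reformulation to go through asynchronously. Second, your observation about the message count is well taken: the terminating \FragmentBroadcast{} costs $O(size(T))$ messages, which is not in general dominated by $O(k\cdot depth(T))$. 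The paper's stated bound is slightly loose in this respect; in all invocations in the paper the extra $O(size(T))$ is absorbed when summing over disjoint fragments, so it does not affect the overall analysis, but strictly speaking the per-call bound should read $O(k\cdot depth(T)+size(T))$, exactly as you note.
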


\textit{Downcast on a Fragment.}
Suppose $k$ distinct and uncombinable messages $M_1,\ldots, M_k$, originating at the root $R$, must be distributed to arbitrary destinations $w_1,\ldots,w_k$ in the tree, respectively.
Procedure~$\Downcast$ performs this operation in a straightforward manner. In each round $i$, $R$ sends the pair $(M_i,w_i)$ to its neighbor on the unique $R$-$w_i$ path in $T$. Intermediate nodes receiving a pair $(M_i,w_i)$ on some round forward it towards $w_i$ in the next round. (Note that tie-breaking is not required.)

To ensure termination detection, the procedure then performs a convergecast of acknowledgements, backtracking on the subtree $T'$ marked by the downcast messages; namely, each intermediate node that received $\ell$ messages from its parent and forwarded $\ell_j$ messages to its child $x_j$ expects ``ack - $\ell_j$'' from $x_j$. After receiving all such ``ack'' messages from its children, it sends ``ack - $\ell$'' to its parent. The root detects termination upon receiving ``ack'' messages from all relevant children.

\begin{lemma}
\label{lem:downcast}
Procedure~$\Downcast$, run by nodes in the tree $T$, performs downcasting of $k$ distinct messages with termination detection in $O(k + depth(T))$ time and $O(k \cdot depth(T) + size(T))$ messages.
\end{lemma}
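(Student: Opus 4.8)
The plan is to establish the three claimed guarantees of $\Downcast$ — correctness, the time and message bounds, and termination detection — separately, following the same template used for $\FragmentBroadcast$ and $\Upcast$, since $\Downcast$ is assembled from the same primitives (pipelined forwarding plus an acknowledgement convergecast).

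First, correctness. Because $T$ is a tree, for each destination $w_i$ there is a unique simple path from the root $R$ to $w_i$, and every node $u$ on that path has a well-defined child lying on the $u$–$w_i$ path (unless $u = w_i$). Hence the rule ``forward $(M_i, w_i)$ to the neighbor on the $u$–$w_i$ path'' is unambiguous, no tie-breaking is needed since each pair carries its own destination, and after at most $depth(T)$ hops $(M_i, w_i)$ reaches $w_i$.

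Next, the complexity of the dissemination phase. For messages, each $M_i$ traverses a path of length at most $depth(T)$, so the dissemination sends at most $k \cdot depth(T)$ messages. For time, I would use a standard pipelining argument: order $M_1, \dots, M_k$ by the order in which $R$ emits them, and show by induction along each root-to-leaf path that $M_i$ reaches the $j$-th node on its path by time at most $i + j$. The key invariant is that $R$ emits at most one message per time unit on each downward edge and each intermediate node forwards the messages received from its parent in FIFO order, one per time unit per downward edge, so no $M_i$ is ever delayed by more than the messages ``ahead of it in the pipeline''; thus $M_i$ reaches $w_i$ within $k + depth(T)$ time units. The acknowledgement phase is a convergecast over the marked subtree $T' \subseteq T$, which has depth at most $depth(T)$ and size at most $size(T)$; each edge of $T'$ carries a single ``ack-$\ell$'' message, adding $O(size(T))$ messages and $O(depth(T))$ time. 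Summing yields $O(k + depth(T))$ time and $O(k \cdot depth(T) + size(T))$ messages.

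Finally, termination detection. The root declares termination once it has received ``ack'' messages from all children it used, which by the convergecast semantics happens only after every $M_i$ has arrived at $w_i$. To let an intermediate node know the total count $\ell$ it must forward (and hence when its own participation ends), I would have $R$ send a terminator marker after the $k$ data messages on each downward edge it used; by FIFO this marker reaches each node of $T'$ only after all of that node's data messages, so the node can finalize $\ell$, propagate the terminator to the children it used, and then aggregate their ``ack-$\ell_j$'' messages into its own ``ack-$\ell$''. This costs only $O(size(T))$ extra messages and $O(depth(T))$ extra time, staying within the stated bounds. The main obstacle I anticipate is making the pipelining time bound fully rigorous in the asynchronous model — formally arguing that congestion on shared tree edges delays each message by only an additive $O(k)$ — together with the bookkeeping that guarantees that \emph{every} node, not just $R$, detects termination.
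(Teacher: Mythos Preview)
Your proposal is correct and follows the same approach as the paper. In fact, the paper does not give a formal proof of this lemma at all: it merely describes Procedure~$\Downcast$ (pipelined forwarding from the root along the unique $R$--$w_i$ paths, followed by an ``ack''-convergecast over the marked subtree $T'$) and then states the lemma as a standard toolbox fact, just as it does for $\FragmentBroadcast$ and $\Upcast$. Your write-up simply fills in the routine pipelining and convergecast arguments that the paper leaves implicit.

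One small point worth noting: your addition of a terminator marker is a genuine refinement over the paper's description. The paper's ack mechanism has each intermediate node wait for ``ack-$\ell_j$'' from each child $x_j$, where $\ell_j$ is the number of messages it forwarded to $x_j$; but as you observed, in the asynchronous setting a node has no a~priori way to know that no further messages are coming from its parent, so it cannot finalize $\ell$ (or the $\ell_j$'s) without some end-of-stream signal. Your terminator fixes this at the claimed cost. An equivalent fix, perhaps slightly closer in spirit to the paper's intent, is to have the root precompute for each child $c$ the count of messages destined for $c$'s subtree and send that count first; this propagates down recursively and lets every node know its $\ell$ in advance. Either variant stays within $O(k+depth(T))$ time and $O(k\cdot depth(T)+size(T))$ messages.
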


\textit{Finding MOE of a Fragment.}
 Informally, minimum outgoing edge (MOE) out of $T$ is the least weight edge out of $T$ to a node with a different cluster ID (i.e, $\neq C$). Formally, it is a tuple $\langle u, v, C, C' \rangle$ such that edge $(u,v)$ is the MOE from $T$ where $u \in T$ with cluster ID $C$ and $v \notin T$ with cluster ID $C' (\neq C)$. Note that nodes not belonging to $T$ but adjacent to $T$ may have the same cluster ID $C$ as the nodes of $T$, and as such it is possible for $T$ to not have any MOE. Yet another application of Wave\&Echo, taken from the algorithm of \cite{GallagerHS83},
results in $R$ being made aware of the MOE of $T$ if such exists. Let us call this module procedure~$\FindMOE$.

\begin{lemma}
\label{lem:finding-moe}
Procedure~$\FindMOE$, when run by the nodes of a tree $T$ with distinct root $R$, and cluster ID $C$, results in $R$ knowing the minimum outgoing edge from $T$, if one exists, where only edges to nodes with a cluster ID $\neq C$ are considered outgoing edges, in $O(depth(T))$ time and $O(\sum_{u \in T} deg(u))$ messages, where $depth(T)$ is the depth of $T$ and $deg(u)$ is the degree of node $u$. Furthermore, every node participating in procedure~$\FindMOE$ can detect termination.
\end{lemma}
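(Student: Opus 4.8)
The plan is to realize $\FindMOE$ as the classical Wave\&Echo (broadcast--convergecast) scheme of~\cite{GallagerHS83}, with a local edge-probing step inserted between the downward broadcast and the upward convergecast to cope with the fact that, in the $KT_0$ model, a node does not initially know the cluster IDs of its neighbors. First, the root $R$ will initiate a $\FragmentBroadcast$ over $T$ of a ``find-MOE'' request; by Lemma~\ref{lem:frag-broadcast} this reaches every node of $T$ in $O(depth(T))$ time using $O(size(T))$ messages and with termination detection. Upon receiving the request, a node $u$ will probe each of its incident edges that is not a tree edge of $T$ by sending a short query over it and waiting for the reply, which carries the cluster ID (and identity) of the far endpoint; tree edges need not be probed since both endpoints have cluster ID $C$ and hence cannot be outgoing. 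Node $u$ then counts replies until it has heard back on every probed edge and, using the incident-edge weights it holds as input, picks the minimum-weight edge among those whose far endpoint reported a cluster ID $\neq C$, forming a candidate tuple $\langle u,v,C,C' \rangle$, or the symbol ``none'' if no such edge exists.

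Next, the nodes will run the Echo phase up $T$: each leaf sends its candidate to its parent, and each internal node waits until it has a candidate from every child, takes the minimum of those together with its own candidate, and forwards the result to its parent. By a straightforward induction on subtree height, the value held by a node $x$ at the end of this phase equals the minimum-weight outgoing edge of $T$ incident to a node in the subtree rooted at $x$ (with ``none'' acting as the identity of $\min$); applying this at $R$, once $R$ has combined the candidates of all its children with its own, $R$ holds exactly the MOE of $T$ if one exists and ``none'' otherwise, which is the desired output. Finally, to give every node termination detection for the whole procedure, $R$ will perform one more $\FragmentBroadcast$ announcing completion; a node declares its own participation over once it has both forwarded its Echo value and received this announcement — it knows its probing is complete because it has matched a reply to each probed edge (FIFO delivery pairs each reply with its query on that edge), and it knows its subtree is done from the count of its children.

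For the complexity, I will argue time $O(depth(T))$ by noting that the initial broadcast, the probing (a single round trip, $O(1)$ time), the Echo convergecast (values travel only rootward, over a path of length $O(depth(T))$), and the terminating broadcast each take $O(depth(T))$ time. For messages, node $u$ sends and receives $O(deg(u))$ probe messages (one query and one reply on each probed edge), which sum to $O(\sum_{u \in T} deg(u))$; each of the two $\FragmentBroadcast$ calls and the Echo phase costs $O(size(T))$ messages, and $size(T) = O(\sum_{u \in T} deg(u))$ since every node has degree at least $1$ (indeed the tree edges alone contribute $2(size(T)-1)$ to the sum). Hence the total message cost is $O(\sum_{u \in T} deg(u))$.

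The step I expect to be the main obstacle is justifying the asynchronous probing: one has to verify that each node can locally decide it has collected all of its probe replies — it can, since it knows in advance exactly which edges it probed and FIFO delivery matches each reply with its query — and that the cluster IDs it reads are the intended values, which holds because the cluster assignment is fixed for the entire execution of $\FindMOE$ (this subroutine does not itself change the clustering), so there is no interleaving with cluster updates to reason about. Everything else is a routine adaptation of the GHS Wave\&Echo correctness argument together with the time and message accounting above.
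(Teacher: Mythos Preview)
Your proposal is correct and matches the paper's approach: the paper does not give a detailed proof of this lemma but simply cites it as ``yet another application of Wave\&Echo, taken from the algorithm of~\cite{GallagerHS83},'' which is exactly the broadcast--probe--convergecast scheme you spell out. Your added details (the local probing step to discover neighbors' cluster IDs in the $KT_0$ setting, the accounting that $size(T)=O(\sum_{u\in T}deg(u))$, and the final $\FragmentBroadcast$ for termination detection) are all sound and in the spirit of the GHS procedure the paper invokes.
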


\textit{Size Calculation of a Fragment.}
We make use of a known tool (essentially a known application of Wave\&Echo, see PIF in \cite{segall1983distributed}), to be run by the nodes of the tree and result in $R$ being made aware of how many nodes (including itself) belong to $T$.
Let us call this Procedure~$\TreeCounting$.

\begin{observation}
\label{lem:tree-counting}
Procedure~$\TreeCounting$, when run by the nodes of a tree $T$ with distinct root $R$, results in $R$ knowing the total number of nodes in $T$ in $O(depth(T))$ time and $O(size(T))$ messages, where $depth(T)$ is the depth of $T$ and $size(T)$ is the number of nodes in $T$. Furthermore, nodes participating in procedure~$\TreeCounting$ can detect termination.
\end{observation}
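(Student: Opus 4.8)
The plan is to spell out Procedure~$\TreeCounting$ explicitly as a two-phase Wave\&Echo (PIF) computation and then verify each claimed bound by a short induction on the subtree structure of $T$. In the \emph{wave} phase the root $R$ sends a ``count-request'' to all of its children; upon first receiving such a message an internal node forwards it to all of its own children, while a leaf immediately begins the echo phase. In the \emph{echo} phase each leaf sends the value $1$ to its parent; an internal node (which knows the set of its children in $T$) waits until it has received a value from every child, adds $1$ for itself, and sends the sum to its parent. The root does the same, thereby obtaining the final count.

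For correctness, I would prove by induction on the height of subtrees that the value a node $u$ sends to its parent (and the value $R$ finally computes) equals $size(T_u)$, the number of nodes in the subtree of $T$ rooted at $u$. The base case is a leaf, which sends $1 = size(T_u)$. For an internal node $u$ with children $x_1,\dots,x_j$, the induction hypothesis gives that $u$ receives $size(T_{x_1}),\dots,size(T_{x_j})$, and $1 + \sum_{i} size(T_{x_i}) = size(T_u)$ because the subtrees $T_{x_i}$ partition $T_u \setminus \{u\}$. Taking $u = R$ yields $size(T)$ at the root.

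For the bounds: every tree edge carries exactly one wave message downward and exactly one echo message upward, so the total is $2(size(T) - 1) = O(size(T))$ messages, each a single integer in $[1,n]$ and thus of $O(\log n)$ bits as required in $\mathcal{CONGEST}$. For time, since a message crosses an edge within one time unit, the wave reaches every node at depth $d$ within $d$ time units, so all leaves begin echoing within $depth(T)$ time; an easy induction then shows a node at depth $d$ sends its echo within at most $2\,depth(T)$ time, so $R$ learns $size(T)$ in $O(depth(T))$ time. For termination detection I would appeal to the FIFO property: once a node has forwarded the wave to its children and sent its echo to its parent, it receives no further messages of this procedure and may declare its participation finished, while $R$ declares termination upon computing $size(T)$. (If one also wants every node to learn the global value, or a single well-defined ``end'' event, $R$ can additionally invoke Procedure~$\FragmentBroadcast$, which by Lemma~\ref{lem:frag-broadcast} adds only $O(depth(T))$ time and $O(size(T))$ messages.)

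The statement is elementary, so I do not anticipate a real obstacle; the only points needing a little care are that ``wait for all children'' is well defined (it is, since each node knows its children in $T$) and that termination detection is argued in the asynchronous model — which is precisely where the FIFO assumption enters.
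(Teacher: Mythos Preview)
Your proposal is correct and matches the paper's intended approach: the paper states this as an observation without proof, merely citing the Wave\&Echo (PIF) technique of Segall~\cite{segall1983distributed}, and your write-up is exactly a careful instantiation of that technique. One small remark: your appeal to FIFO for termination detection is unnecessary---a non-root node receives exactly one wave message from its parent and one echo from each child, so once it has sent its echo it is provably done regardless of channel ordering; the only structural assumption actually used is that each node knows its set of children in $T$, which you already noted.
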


\textit{Diameter Calculation of a Fragment.}
Another known application of Wave\&Echo allows
$R$ to calculate the diameter of the tree $T$, let us call that 
Procedure~$\DiameterCalculation$.

\begin{observation}
\label{lem:diameter-calculation}
Procedure~$\DiameterCalculation$, when run by the nodes of a tree $T$ with distinct root $R$, results in $R$ knowing the diameter of $T$ in $O(depth(T))$ time and $O(size(T))$ messages, where $depth(T)$ is the depth of $T$ and $size(T)$ is the number of nodes in $T$. Furthermore, nodes participating in procedure~$\DiameterCalculation$ can detect termination.
\end{observation}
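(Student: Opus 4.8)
The plan is to implement $\DiameterCalculation$ as a single Wave\&Echo (PIF) on $T$: a wave propagating down from $R$ to trigger participation, followed by an echo (convergecast) that carries up the partial computations. The computation rests on the standard recursive characterization of tree diameter. For a node $v$ of $T$, let $T_v$ be the subtree rooted at $v$, let $h(v)$ be the number of edges on the longest downward root-to-leaf path in $T_v$, and let $d(v)$ be the diameter (in edges) of $T_v$. If $v$ is a leaf then $h(v) = d(v) = 0$; otherwise $h(v) = 1 + \max_{x} h(x)$ over the children $x$ of $v$, and $d(v)$ is the maximum of (i) $\max_x d(x)$ and (ii) the sum of the two largest values among $\{\,1 + h(x) : x \text{ a child of } v\,\}$ (just the single largest such value if $v$ has only one child). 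This is correct because a longest simple path in $T_v$ either lies within a single child subtree $T_x$, or has $v$ as its topmost vertex, in which case it is the concatenation of at most two maximal descending paths into distinct child subtrees of $v$.

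First I would have $R$ flood a ``start'' message down $T$ (the wave). Each leaf, upon receipt, immediately sends the pair $(h,d)=(0,0)$ to its parent; each internal node $v$ waits until it has received a pair from each of its children, then computes $(h(v),d(v))$ by the recursion above and sends it to its parent. A straightforward induction on $h(v)$ shows that every node reports the correct values, so when $R$ has heard from all of its children it knows $d(R)$, which equals the diameter of $T$ since $T_R = T$ (and $h(R)=depth(T)$). Every reported value is at most $n$ and therefore fits in an $O(\log n)$-bit message, so the $\mathcal{CONGEST}$ bound is respected; each node sends $O(1)$ messages in each of the two phases, for $O(size(T))$ messages total. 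For the time bound, an induction on $h(v)$ shows that $v$'s echo message reaches its parent by time $O(h(v)) \le O(depth(T))$ (a node acts as soon as all of its children's single messages have arrived, and each message crosses an edge within one time unit), and the triggering wave likewise costs $O(depth(T))$ time; hence the whole procedure runs in $O(depth(T))$ time. Termination detection comes essentially for free: $R$ knows it is done once the echo completes, and every other node's last action is sending its echo message; if one additionally wants all nodes to learn the value $d(T)$, $R$ can run $\FragmentBroadcast$ at an extra $O(depth(T))$ time and $O(size(T))$ messages (Lemma \ref{lem:frag-broadcast}).

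The only step needing genuine care is the asynchronous timing argument --- confirming that, despite unbounded message delays, the echo still completes within $O(depth(T))$ time. This reduces to the height-based induction sketched above, using the model assumption that each message is delivered in at most one time unit together with the fact that a node never blocks on anything except its children's single echo messages. The diameter recursion, the $O(\log n)$-bit message size, and the $O(size(T))$ message count are all routine.
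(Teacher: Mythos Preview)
Your proposal is correct and is precisely the Wave\&Echo instantiation the paper has in mind; the paper states this as an observation without proof, merely remarking that it is ``another known application of Wave\&Echo,'' so your argument is a faithful fleshing-out of that pointer. One tiny wording quibble: the echo from $v$ reaches its parent by time $O(d_v + h(v))$ measured from the start (not $O(h(v))$ alone), but since $d_v + h(v) \le depth(T)$ your stated $O(depth(T))$ bound is unaffected.
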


\section{Low Diameter Spanning Tree Algorithm}
\label{sec:lowDiamSpanningTree}
Let us now describe a novel algorithm for constructing a low diameter spanning tree in a time-efficient and (near) message-optimal manner in an \emph{asynchronous} network. This serves as a crucial ingredient for our MST algorithm
of Section \ref{sec:mst}.

\subsection{Randomized Low Diameter Decomposition (MPX)}
\label{ss:MPX}

Let $\overline{G}=(\overline{V},\overline{E})$ be any (undirected, unweighted) graph with $\overline{n} \leq n$ nodes and $\overline{m} \leq m$ edges; in particular, $\overline{G}$ can be different from the communication graph.
A probabilistic $(\beta, r$) \emph{low diameter decomposition} of $\overline{G}$ is a partition of $\overline{V}$ into disjoint node sets $\overline{V}_1, \ldots, \overline{V}_t$ called \emph{clusters}. The partition satisfies (1) each cluster $\overline{V}_i$ has strong diameter $r$, i.e., $dist_{\overline{G}[V_i]}(u,v) \leq r$ for any two nodes $u,v \in \overline{V}_i$, and (2) the probability that an edge $e \in \overline{E}$ is an \intercluster\ edge  (that is, the endpoints of $e$ are in different clusters) is at most $\beta$.

\subparagraph*{MPX Decomposition in Synchronous $\mathcal{CONGEST}$} Let us describe a simple distributed variant of the MPX decomposition algorithm of Miller et al. \cite{MPX13} --- Procedure~$\MPX$ --- executed in a synchronous setting with simultaneous wakeup on graph $\overline{G}$. 
In Subsect. \ref{sec:rootedSpanningTree}, we execute the algorithm on virtual cluster graphs (where each node is in fact a set of nodes in the communication graph $G$) and also describe the distributed simulation required to do so.

Let $\delta_{max} = \lfloor 2 \cdot \frac{\ln n}{\beta} \rfloor$. Initially, each node $v \in \overline{V}$ draws a random variable $\delta_v$ from the exponential random distribution with parameter $\beta$ and sets its \emph{start-time} variable $S_v$ to $\max\{1,\delta_{max} - \lfloor \delta_v \rfloor\}$.
Procedure~$\MPX$ guarantees the following through simple flooding: (1) each node $v \in \overline{V}$ is assigned to the cluster of the node $u = argmin_{w \in \overline{V}} \{(dist_{\overline{G}}(v,w) + S_w, id_w)\}$
and (2) each cluster has a spanning tree of depth at most $\delta_{max}$. (Each node locally keeps information about the edge to its parent in the spanning tree. In other words, the spanning tree is oriented towards the root.)

More precisely, the ``simple flooding'' is done in $\delta_{max}+1$ rounds. Initially, all nodes are \emph{unassigned}. In round $i$, each newly-assigned node $v$ (i.e., assigned in round $i-1$) sends to its neighbors a message containing the ID of the cluster leader. 
Other assigned nodes do nothing. 
Finally, for each unassigned node $v$, let $M_{id}$ be the set containing all received IDs, as well as $id_v$ if $S_v=i$. If $M_{id}$ is the empty set, $v$ does nothing. Otherwise, $v$ assigns itself to the cluster of the node $u$ with the lexicographically smallest ID in $M_{id}$. If $u \neq v$, $v$ keeps the edge (an arbitrary one if there are multiple such edges) along which it receives $id_u$ as the edge to its parent. (Note that this spanning tree guarantees that the cluster is connected and has strong diameter at most $\frac{4 \ln n}{\beta}$.)

\subparagraph*{Analysis} 
The following lemmas are known results from \cite{MPX13,HW16,CDHHLP18,CDHP20}. We first provide definitions and an auxiliary lemma (see Lemma \ref{lem:MPXArrivals}) followed by proofs of Lemmas \ref{lem:mpxDecomp} and \ref{lem:MPXDiameter}.

Consider some fixed execution of the algorithm and node $v \in \overline{V}$. Then $D_u = S_u + dist(u,v) - 1 = \delta_{max} - \lfloor \delta_u \rfloor + dist(u,v) - 1$ denotes the \emph{(arrival) round} of $u$, that is, the first round in which $v$ can receive a message from $u$'s cluster. 
For every integer $1 \leq j \leq n$, let $z_j$ be the node with the $j$th smallest arrival round in the execution. For every integer $1 \leq k \leq n$, let  $S_k=\{z_1,\ldots,z_k\}$. Building upon these definitions, for a node $v \in \overline{V}$, positive integers $1 \leq k,r \leq n$, let $\EV_{v,k,r}$ denote the event that after the execution of the algorithm, $D_{z_{k+1}} - D_{z_1} \leq r$.

\begin{lemma}
\label{lem:MPXArrivals}
For any node $v \in \overline{V}$ and positive integers $1 \leq k,r \leq n$, 
$$\Pr(\EV_{v,k,r}) ~\le~ (1 - \exp(- (r+1) \beta))^k$$
\end{lemma}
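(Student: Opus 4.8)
The plan is to reduce the statement to a clean extreme‑value estimate for shifted exponential random variables, and then prove that estimate. First I would rewrite the arrival rounds. Recall $D_u = \delta_{max} - \lfloor \delta_u \rfloor + dist_{\overline{G}}(u,v) - 1$ depends on the draw $\delta_u$ only through $\lfloor \delta_u \rfloor$, and that the $\delta_u$ are independent; since $\delta_{max}$ is constant and $D_u$ is smallest exactly when $\lfloor \delta_u \rfloor - dist_{\overline{G}}(u,v)$ is largest, the event $\EV_{v,k,r}$ is precisely the event that at least $k+1$ of the quantities $\lfloor \delta_u \rfloor - dist_{\overline{G}}(u,v)$ lie within $r$ of their maximum. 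To remove the floor I would pass to the continuous surrogate $Y_u := \delta_u - d_u$, where $d_u := dist_{\overline{G}}(u,v) \ge 0$: writing $\hat D_u := \delta_{max} + d_u - \delta_u$, the bounds $\delta_u - 1 < \lfloor\delta_u\rfloor \le \delta_u$ give $\hat D_u - 1 \le D_u < \hat D_u$, so on $\EV_{v,k,r}$ at least $k+1$ of the $\hat D_u$ (equivalently, of the $-Y_u$) lie within $r+1$ of their minimum. Hence, with $s := r+1$, it suffices to prove
\[
\Pr\Big[\,\bigl|\{u\in\overline{V}: Y_u > \max_w Y_w - s\}\bigr| \ge k+1\,\Big] \;\le\; (1-e^{-s\beta})^k ,
\]
where the $\delta_u$ are i.i.d.\ exponential with parameter $\beta$ and the $d_u \ge 0$ are arbitrary constants.

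The main step is this extreme‑value bound. I would first settle the homogeneous case $d_u\equiv 0$ (so $Y_u=\delta_u$), where it is in fact an equality: by R\'enyi's representation of exponential order statistics the top $k$ spacings are independent exponentials of rates $\beta, 2\beta,\ldots,k\beta$ regardless of the number of nodes, so $\max_w\delta_w$ minus the $(k+1)$‑st largest value has the law of $\sum_{j=1}^k E_j$ with $E_j$ independent of rate $j\beta$; and a short computation — e.g.\ evaluating $\Pr[\max-\min\le s] = (k+1)\int_0^\infty \beta e^{-\beta x}\,(e^{-\beta x}-e^{-\beta(x+s)})^k\,dx$ for $k+1$ i.i.d.\ rate‑$\beta$ exponentials by conditioning on which variable is the minimum — evaluates this to exactly $(1-e^{-s\beta})^k$. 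This also shows the bound is tight.

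For general $d_u$ I would apply the increasing, law‑preserving substitution $\delta_u = -\tfrac1\beta\ln X_u$ with $X_u$ i.i.d.\ uniform on $(0,1)$, and set $Z_u := X_u e^{\beta d_u}$, so the $Z_u$ are independent with $Z_u$ uniform on $(0,e^{\beta d_u})$ and the event $\{Y_u > \max_w Y_w - s\}$ becomes $\{Z_u < e^{\beta s}\min_w Z_w\}$. Thus I must bound $\Pr[\,|\{u: Z_u < e^{\beta s}\min_w Z_w\}|\ge k+1\,]$, which I would do by conditioning on $L := \min_w Z_w$ and on the index attaining it: conditionally, the remaining $Z_w$ are independent and uniform on $(L, e^{\beta d_w})$, each lying in the window $(L, e^{\beta s}L)$ with an explicit probability, and integrating the product of these probabilities against the density of $L$ yields the claimed bound, with equality exactly when all $d_u$ coincide.

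The hard part is this final integration. Its subtlety is that ``spreading out'' the shifts $d_u$ is \emph{not} monotone in any pointwise‑coupling sense — a vertex very close to $v$ (small $d_u$) can be essentially forced to lie near the maximum — so one cannot simply reduce to the homogeneous case by a domination argument; instead one must keep the factor $\Pr[\min_w Z_w = L]$ in play and rely on its concentration near $0$ (being a minimum of many variables) to absorb the blow‑up of the conditional window probabilities at large $L$. I expect Step 1 and the homogeneous computation to be routine, so this cancellation is the only real work.
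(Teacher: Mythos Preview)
Your reduction to the shifted-exponential extreme-value problem is fine, and the homogeneous computation via R\'enyi's representation is correct (and indeed shows the bound is tight). The gap is exactly where you flag it: the ``final integration'' is not carried out, and with your choice of conditioning it is genuinely awkward. Conditioning on $L=\min_w Z_w$ (equivalently, on $\max_w Y_w$) leaves each remaining $\delta_w$ conditioned on an \emph{upper} bound $\delta_w < M+d_w$, which destroys memorylessness; the resulting window probabilities $\frac{e^{-\beta(M+d_w)}(e^{\beta s}-1)}{1-e^{-\beta(M+d_w)}}$ depend on $w$ and on $M$, and the hoped-for cancellation against the density of the minimum is neither stated precisely nor proved. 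As you note yourself, there is no monotone coupling to the homogeneous case, so this step is not a formality.

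The paper sidesteps all of this by conditioning on the other end: it conditions on the set $S_k=\{z_1,\dots,z_k\}$ of the $k$ smallest arrivals and on $D^*=D_{z_{k+1}}$. Under this conditioning each $\delta_{z_i}$ is constrained by a \emph{lower} bound (namely $D_{z_i}\le D^*$, i.e.\ $\delta_{z_i}\ge c_{z_i}-D^*$), and the constraints decouple across $i$. Memorylessness then gives, for each $i$ independently,
\[
\Pr\bigl(\delta_{z_i}\le (c_{z_i}-D^*)+(r+1)\ \bigm|\ \delta_{z_i}\ge c_{z_i}-D^*\bigr)=1-e^{-(r+1)\beta},
\]
and the product over $i=1,\dots,k$ is exactly $(1-e^{-(r+1)\beta})^k$; the law of total probability removes the conditioning. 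In your variables this is: condition on $Y_{(k+1)}$ and on which $k$ indices exceed it; each such $\delta_w$ is then exponential conditioned to exceed $Y_{(k+1)}+d_w$, hence (by memorylessness) lands in $(Y_{(k+1)},Y_{(k+1)}+s]$ with probability $1-e^{-s\beta}$, independently of $d_w$. No integration is needed. The one-line fix to your plan is thus to condition on the $(k{+}1)$st order statistic rather than the first.
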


\begin{proof}
We condition on $S_k$ and $D^*=D_{z_{k+1}}$. The proof is based on first showing the stated upper bound on the probability of $\EV_{v,k,r}$ conditioned on $S_k$ and $D^*$, and then applying the law of total probability to derive the lemma statement. We next describe the first half of the proof in more detail.

For any integer $i \geq 1$, let $c_{z_i} = \delta_{max} + dist(z_i,v) - 1$. We have
$\Pr(\EV_{v,k,r} \; | \; S_k, D^*)\le p$ for
$$p ~=~ \Pr\left(\bigwedge_{i=1}^k [D^* - D_{z_i} \leq r]\right) ~=~ \Pr\left(\bigwedge_{i=1}^k [\delta_{z_i} \leq r + 1 +c_{z_i} - D^*]\right) ~=~ \prod_{i=1}^k \Pr(\delta_{z_i} \leq r + 1 +c_{z_i} - D^*),$$ 
where the last equality holds since the random variables $\delta_{z_i}$ are independent. 
Next, note that $D^* \geq D_{z_i}$ for any integer  $1 \leq i \leq k$, and thus $\Pr(\lfloor \delta_{z_i} \rfloor \geq c_{z_i} - D^*) = 1$. Hence, $\Pr(\delta_{z_i} \geq c_{z_i} - D^*) = 1$ and 
$$p ~=~ \prod_{i=1}^k \Pr(\delta_{z_i} \leq r +1 + c_{z_i} - D^* \; | \; \delta_{z_i} \geq c_{z_i} - D^*).$$ 
Finally,  
$$p ~\leq~ \prod_{i=1}^k \Pr(\delta_{z_i} \leq r+1) ~=~ \prod_{i=1}^k (1 - \exp(- (r+1) \beta)) ~=~ (1 - \exp(- (r+1) \beta))^k$$
where the inequality holds by the memorylessness of the exponential distribution.
\end{proof}

\begin{lemma}
\label{lem:mpxDecomp}
Procedure~$\MPX$ computes a $(2 \beta$, $\frac{4 \ln n}{\beta})$ low-diameter decomposition of $\overline{G}$ w.h.p. in $O(\frac{\ln n}{\beta})$ time and $O(m \frac{\ln n}{\beta})$ messages in the synchronous setting.
\end{lemma}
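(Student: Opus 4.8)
The plan is to verify, one at a time, the three guarantees bundled into the statement: (i)~every cluster $\overline{V}_i$ has strong diameter at most $\frac{4\ln n}{\beta}$; (ii)~every edge of $\overline{G}$ is an \intercluster\ edge with probability at most $2\beta$; and (iii)~the $O(\frac{\ln n}{\beta})$ time and $O(m\frac{\ln n}{\beta})$ message bounds. I expect (i)~and~(iii) to hold deterministically, directly from the structure of the flooding, while (ii) is the probabilistic content and is exactly where Lemma~\ref{lem:MPXArrivals} is invoked. The ``w.h.p.'' in the statement enters only through a mild technicality: the event that some shift $\delta_v$ is so large that $\delta_{max} - \lfloor\delta_v\rfloor$ becomes nonpositive and $S_v$ is truncated to $1$. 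This happens per node with probability $\exp(-\beta\delta_{max})$, which is $1/\poly(n)$ by the choice $\delta_{max} = \lfloor 2\,\frac{\ln n}{\beta}\rfloor$; I would carry out the analysis of (ii) on the complement of this event, where $S_v = \delta_{max} - \lfloor\delta_v\rfloor$ and the un-truncated arrival-round expression $D_u = S_u + dist(u,v) - 1$ used in Lemma~\ref{lem:MPXArrivals} is the genuine one.

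For (i), I would first note that $\delta_v \ge 0$ always, so $S_v \le \delta_{max}$ for every node; hence a one-line induction on the round index shows every node is assigned to a cluster by round $\delta_{max}$ (an unassigned node reaching round $S_v$ inserts its own ID into its candidate set and commits), so the output is a genuine partition. Next, the parent pointers built during the flooding form, for each cluster leader $u$, a tree rooted at $u$ that spans the cluster and uses only edges with both endpoints in that cluster, and in which a node that commits in round $r$ sits at depth $r - S_u$. Since the flooding lasts only $\delta_{max}+1$ rounds and $S_u \ge 1$, this depth is at most $\delta_{max}$, so any two cluster members are joined \emph{within the cluster}, through $u$, by a path of length at most $2\delta_{max} \le \frac{4\ln n}{\beta}$; this is (i).

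For (ii), fix an edge $e = (x,y)$ and recall that a node commits to the cluster of the node minimizing $(D_u, id_u)$ over its own arrival rounds. The crux is the deterministic implication: \emph{if $e$ is an \intercluster\ edge, then the two smallest arrival rounds at (a virtual midpoint of) $e$ differ by at most $1$} --- i.e., the event bounded in Lemma~\ref{lem:MPXArrivals} with $k = 1$ and $r = 1$ occurs. The reason is that along $e$ every cluster's message arrives in rounds that differ by at most one between $x$ and $y$, so a cluster that is \emph{strictly} the first to reach one endpoint (after ID tie-breaking) is also the first to reach the other, forcing the same choice at both ends; the midpoint bookkeeping and the handling of integer-round/ID ties are precisely those in \cite{MPX13,HW16,CDHHLP18,CDHP20}. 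Granting this implication, Lemma~\ref{lem:MPXArrivals} yields that $e$ is an \intercluster\ edge with probability at most $1 - \exp(-2\beta) \le 2\beta$.

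Finally, (iii) is immediate: the flooding lasts $\delta_{max}+1 = O(\frac{\ln n}{\beta})$ rounds, and each node transmits to each neighbor at most once (in the single round following its commitment), for a total of $O(\overline{m}) = O(m)$ messages, which lies within $O(m\frac{\ln n}{\beta})$ since $\delta_{max} \ge 1$ forces $\frac{\ln n}{\beta} \ge \frac12$. The step I expect to require the most care is the deterministic implication inside (ii): turning ``$e$ is cut'' into the near-tie event of Lemma~\ref{lem:MPXArrivals} rigorously in the flooding model --- reconciling integer rounds, ID-based tie-breaking, and the (w.h.p.\ inactive) shift truncation --- is the one genuinely delicate point, and I would import it from the cited MPX analyses rather than reprove it from scratch.
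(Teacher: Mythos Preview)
Your proposal is correct and follows essentially the same approach as the paper: the w.h.p.\ event that no $\lfloor\delta_v\rfloor$ exceeds $\delta_{max}$ is isolated first, the strong-diameter bound comes from the depth of the flooding trees, and the edge-cut probability is reduced to the near-tie event $\EV_{\cdot,1,1}$ of Lemma~\ref{lem:MPXArrivals}, giving $1-\exp(-2\beta)\le 2\beta$. The only cosmetic difference is that the paper applies Lemma~\ref{lem:MPXArrivals} directly at one endpoint of the edge rather than at a virtual midpoint, but this is immaterial.
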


\begin{proof}
We first note for any node $v \in \overline{V}$, $\Pr[\lfloor \delta_v \rfloor > \delta_{max}] = \Pr[\delta_v > \frac{2 \ln n}{\beta}] = \exp(- 2 \ln n) = \frac{1}{n^2}$. Hence, by union bound, $\lfloor \delta_v \rfloor \leq \delta_{max}$ for every node $v \in \overline{V}$ with high probability. We hereafter exclude this unlikely event and assume $\delta_{max} \geq \max_{v \in \overline{V}} \{\lfloor \delta_v \rfloor\}$. This implies that all nodes belong to a cluster.

Next, note that by the algorithm description, each cluster is spanned by a tree of depth at most $\frac{2 \ln n}{\beta}$. Hence, all clusters have strong diameter at most $\frac{4 \ln n}{\beta}$.
Finally, an edge is cut if its two endpoints $u$ and $v$ are in different clusters. This implies that for node $v$ (without loss of generality), the two smallest arrival rounds differ by at most 1, which corresponds to event $\EV_{v,1,1}$. By Lemma \ref{lem:MPXArrivals}, $\Pr(\EV_{v,1,1}) \leq (1 - \exp(- 2\beta)) \leq 2 \beta$. The lemma follows.
\end{proof}

From the low diameter decomposition computed by Procedure~$\MPX$ (or in fact, from any partition $\mathcal{P}$ of $\overline{V}$ into disjoint node sets $\overline{V}_1, \ldots, \overline{V}_t$), one can define a cluster graph $\overline{G}^*= (\overline{V}^*, \overline{E}^*)$, as follows. Its node set $\overline{V}^*$ $ = \{\overline{V}_1, \ldots, \overline{V}_t\}$ consists of cluster nodes, one for each cluster $\overline{V}_i$ of the decomposition, and two cluster nodes $\overline{V}_i$ and $\overline{V}_j$ are adjacent in $\overline{G}^*$ if there exist two nodes $w,w'$ in $\overline{V}$ such that $w\in\overline{V}_i$, $w'\in\overline{V}_j$ and $(w,w')\in\overline{E}$. We call $\overline{G}^*$ the \emph{cluster graph induced by $\mathcal{P}$.} 

\begin{lemma}
\label{lem:MPXDiameter}
For any positive integer $k \geq 1$, if the diameter of $\overline{G}$ satisfies $\overline{D} \geq k \frac{\ln^2 n}{\beta^4}$, then the diameter of the cluster graph $\overline{G}^*$ is at most $2 \beta \overline{D}$, with probability at least $1 - \frac{1}{n^{k-2}}$.
\end{lemma}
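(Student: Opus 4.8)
The plan is to bound the diameter of $\overline{G}^*$ by showing that any shortest path $P = (x_0, x_1, \ldots, x_{\overline{D}'})$ in $\overline{G}$ between two arbitrary nodes (where $\overline{D}' \le \overline{D}$) visits at most $2\beta\overline{D}$ distinct clusters, and that consecutive clusters along $P$ are adjacent in $\overline{G}^*$. The adjacency part is immediate from the definition of the cluster graph: if $x_i \in \overline{V}_a$ and $x_{i+1} \in \overline{V}_b$ with $a \ne b$, then the edge $(x_i, x_{i+1}) \in \overline{E}$ witnesses that $\overline{V}_a$ and $\overline{V}_b$ are adjacent in $\overline{G}^*$. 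So the number of \emph{cluster changes} along $P$ is an upper bound on the $\overline{G}^*$-distance between the endpoint clusters, and it suffices to show this count is at most $2\beta\overline{D}$ with the stated probability.

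The key step is to control the number of inter-cluster edges on $P$. Each edge $e$ of $\overline{G}$ is inter-cluster with probability at most $\beta$ by Lemma~\ref{lem:mpxDecomp} (note Procedure~$\MPX$ gives a $(2\beta, \cdot)$-decomposition, so I should be careful: I will apply the decomposition with parameter $\beta/2$ so that the cut probability per edge is at most $\beta$, or equivalently absorb constants — the cleanest route is to invoke Lemma~\ref{lem:MPXArrivals} directly with the event $\EV_{v,k,r}$ to get a concentration bound rather than just a bound on the expectation). Let $X$ be the number of inter-cluster edges among the at most $\overline{D}$ edges of $P$. Then $\Exp{X} \le \beta\overline{D}$, but a Markov bound only gives a constant-probability statement, which is too weak for the $1 - n^{-(k-2)}$ guarantee. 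This is where the hypothesis $\overline{D} \ge k\frac{\ln^2 n}{\beta^4}$ and the full strength of Lemma~\ref{lem:MPXArrivals} come in.

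The main obstacle is the lack of independence: whether edge $e$ on $P$ is cut depends on the $\delta$-values of all nearby nodes, and adjacent edges of $P$ share endpoints, so the cut-indicator variables along $P$ are not independent and a naive Chernoff bound does not apply directly. The way to handle this, following \cite{MPX13,HW16,CDHP20}, is to partition $P$ into $\Theta(\beta^2\overline{D}/\ln n)$ or so \emph{blocks} of consecutive vertices, each of length roughly $\Theta(\ln n / \beta^2)$ — long enough that whether a block contains an inter-cluster edge can be bounded using Lemma~\ref{lem:MPXArrivals} applied at a representative node of the block with a suitable choice of $r$ (on the order of the block length), giving per-block cut probability bounded by something like $(1 - e^{-(r+1)\beta})^{1} $ made small, and crucially far-apart blocks depend on disjoint sets of $\delta$-variables (by the strong-diameter bound $\frac{4\ln n}{\beta}$ on clusters, a cluster boundary is influenced only by nodes within distance $O(\frac{\ln n}{\beta})$, so blocks separated by more than that influence distance are independent). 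One then colors the blocks into $O(1)$ groups so that within each group the relevant events are mutually independent, applies a Chernoff bound within each group to show the number of ``cut blocks'' concentrates around its mean, and chooses the block length so the mean is $\Theta(\beta\overline{D})$ while the failure probability is at most $n^{-(k-2)}$ thanks to $\overline{D} \ge k\frac{\ln^2 n}{\beta^4}$ making the Chernoff exponent $\Omega(k\ln n)$. Summing over the $O(1)$ color classes and a union bound gives: with probability at least $1 - n^{-(k-2)}$, the number of cut blocks — hence cut edges, hence cluster changes — along $P$ is at most $2\beta\overline{D}$, and since this holds for the shortest path realizing $\overline{D}$, the diameter of $\overline{G}^*$ is at most $2\beta\overline{D}$ as claimed.
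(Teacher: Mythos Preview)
Your high-level strategy---count inter-cluster edges along a shortest $u$--$v$ path in $\overline{G}$, note that the per-edge cut events have bounded-range dependence, and use a concentration bound---is exactly the paper's approach. However, two concrete points in your execution diverge from (and are weaker than) the paper's argument.

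First, the block detour is both unnecessary and, as stated, incorrect. You write ``the number of cut blocks --- hence cut edges'', but a single block can contain many inter-cluster edges, so the count of cut blocks does not upper-bound the number of cluster changes. The paper avoids blocks entirely: it works directly with the edge indicators $X_i = \mathbf{1}[w_i,w_{i+1}\text{ in different clusters}]$, observes that $X_i$ and $X_j$ are independent whenever $|i-j| > \lfloor 4\ln n/\beta\rfloor$ (since the cluster assignment of any vertex depends only on $\delta$-values of nodes within distance $\delta_{\max}$), colors the $X_i$ with $\chi=\lfloor 4\ln n/\beta\rfloor$ colors so that same-colored variables are mutually independent, and then applies the Chernoff--Hoeffding bound for $\chi$-chromatically dependent variables (Theorem~3.2 of Dubhashi--Panconesi): $\Pr(X \ge E[X]+t)\le \exp(-2t^2/(\chi l))$. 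Plugging in $E[X]\le 2\beta l$ (from Lemma~\ref{lem:mpxDecomp}, not Lemma~\ref{lem:MPXArrivals}) and $t=\beta l$ gives $\Pr(X\ge 3\beta l)\le \exp(-\beta^3 l/(2\ln n))$, which is at most $n^{-k}$ once $l \ge 3\beta\overline{D} \ge 3k\ln^2 n/\beta^3$. So Lemma~\ref{lem:MPXArrivals} is not needed here at all.

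Second, your final sentence (``since this holds for the shortest path realizing $\overline{D}$'') is a genuine gap: the pair of clusters that realizes the diameter of $\overline{G}^*$ need not correspond to a diameter-realizing pair of vertices in $\overline{G}$. The paper instead proves the bound for \emph{every} pair $u,v$ with $\mathrm{dist}_{\overline{G}}(u,v)>3\beta\overline{D}$ and then takes a union bound over all $n^2$ pairs; this is precisely where the stated probability $1-n^{-(k-2)}$ (rather than $1-n^{-k}$) comes from.
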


\begin{proof}
Again, we assume $\delta_{max} \geq \max_{u \in \overline{V}} \{\delta_u\}$, which holds with high probability. For any node $v \in \overline{V}$, let $C_v$ denote the cluster containing $v$ after the execution of the algorithm.

Consider any two nodes $u,v \in \overline{V}$ such that $l = dist_{\overline{G}}(u,v) > 3 \beta \overline{D}$. (Note that if $l \leq 3 \beta \overline{D}$, then $dist_{\overline{G}^*}(C_u,C_v) \leq 3 \beta \overline{D}$.)
Let $(w_1,\ldots,w_{l+1})$ be the shortest path  between $u$ and $v$ in $\overline{G}$ (where $w_1 = u$ and $w_{l+1} = v$). Moreover, for any integer $i \in [1,l]$, let $X_i$ be the indicator random variable of $w_i$ and $w_{i+1}$ being in the same cluster. Then, the random variable $X = \sum_{i=1}^{l} X_i$ is an upper bound on $dist_{\overline{G}^*}(C_u,C_v)$. By Lemma \ref{lem:mpxDecomp}, each edge is an inter-cluster edge with probability at most $2\beta$. Hence, by the linearity of expectation, $E[X] \leq 2\beta l$.

Next, let us provide a concentration bound for $X$ by showing that the random variables $X_i$ are only locally dependent. First, for any two integers $i, j \in [1,l]$ such that $|i - j| > \lfloor 4 \frac{\ln n}{\beta} \rfloor$, $X_i$ and $X_j$ are independent (since the same node cannot affect $w_i$ and $w_j$ with our choice of $\delta_{max}$). Then, we can color the random variables $\{X_i\}_{i=1,\ldots,l}$ using $\chi = \lfloor 4 \frac{\ln n}{\beta} \rfloor$ --- by coloring $X_i$ with $i \mod (\chi + 1)$ --- such that variables with the same color are independent. In other words, the random variables $X_i$ are only locally dependent and thus we can apply a specific Chernoff-Hoeffding bound (Theorem 3.2 from \cite{DP09}): $\Pr(X \geq E[X] + t)  \leq \exp(-2t^2 / (\chi \cdot l))$. Hence, $\Pr(X \geq 3 \beta l)  \leq \exp(-2 (\beta l)^2 / (\chi \cdot l)) \leq \exp(-2 \beta^2 l /\chi)$.
Since $l > 3 \beta \overline{D} > 3 k \frac{\ln^2 n}{\beta^3}$, $\Pr(X \geq 3 \beta l) \leq \exp(- \frac{3}{2} k \ln n) \leq \frac{1}{n^k}$. By taking a union bound over all $n^2$ possible pairs of nodes $u,v \in \overline{V}$, the lemma statement follows.
\end{proof}

\subsection{Rooted Spanning Tree}
\label{sec:rootedSpanningTree}
Let us now describe an asynchronous distributed algorithm to construct a low diameter rooted spanning tree, given a pre-specified root, in a time-efficient and (near) message-optimal manner --- see Theorem \ref{thm:spanningTreeConstruction}. We assume that each node knows whether it is the pre-specified root prior to the start of the algorithm. We also assume initially that the diameter of the original graph, $D$, is known to the nodes. We explain how to remove this assumption at the end of the section.

\begin{theorem}
\label{thm:spanningTreeConstruction}
Given a graph $G$ with $n$ nodes, $m$ edges and diameter $D$, as well as a distinguished node $R$, and a constant parameter $1 \geq \epsilon>0$, the asynchronous distributed Procedure~$\SpanningTreeConstruction(\epsilon)$ computes an $\Tilde{O}(D^{1+\epsilon})$-diameter spanning tree rooted in $R$ with termination detection, using $\Tilde{O}(D^{1+\epsilon})$ time with high probability and $\Tilde{O}(m)$ messages with high probability.
\end{theorem}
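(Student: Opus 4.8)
The plan is to construct the tree recursively by repeatedly applying the MPX low-diameter decomposition of Section~\ref{ss:MPX} on a hierarchy of cluster graphs, so that each level shrinks the effective diameter by a polynomial factor while only losing a $\polylog(n)$ factor in depth and charging $\tilde O(m)$ messages overall. Concretely, I would pick $\beta = 1/\polylog(n)$ (say $\beta = \Theta(1/\log^c n)$ for a suitable constant $c$ tied to $\epsilon$), run Procedure~$\MPX$ on $G$ to get clusters of strong diameter $O(\log n / \beta) = \tilde O(1)$, each with a rooted spanning tree; then form the cluster graph $\overline G^*$, whose diameter is at most $2\beta \overline D \le \overline D / \polylog(n)$ w.h.p.\ by Lemma~\ref{lem:MPXDiameter} (as long as the current diameter is still large enough for that lemma to apply, i.e.\ $\overline D \gtrsim \ln^2 n/\beta^4$, which is just $\polylog(n)$), and recurse on $\overline G^*$. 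After $O(\log_{1/\beta} D) = O(\log D / \log\log n)$ levels the residual cluster graph has $\polylog(n)$ diameter and one can finish off with a direct BFS/flooding from (the cluster containing) $R$. Composing the per-level spanning trees bottom-up yields a spanning tree of $G$ of depth at most $\prod_{\text{levels}} O(\log n/\beta) = (\log n/\beta)^{O(\log D/\log\log n)}$, and the exponent arithmetic is exactly what gives $\tilde O(D^{1+\epsilon})$: choosing $\beta$ so that $1/\beta = \polylog(n)$ makes each factor $\polylog(n)$ and there are $O(\log D/\log\log n)$ of them, and a standard calculation (as in MPX-based BFS-substitutes) shows the product is $D^{o(1)} \cdot \polylog(n)$ times the final $\polylog(n)$ diameter, i.e.\ $\tilde O(D^\epsilon)$; I would state the precise choice of $\beta$ as a function of $\epsilon$.

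The second ingredient is the asynchronous \emph{simulation} of this synchronous recursive procedure, which is where most of the work lies. At the base level we cannot run a synchronizer on $G$ from scratch cheaply, so instead I would run $\MPX$'s flooding directly in the asynchronous model: since $\MPX$ is just $\delta_{max}+1 = \tilde O(1)$ rounds of bounded flooding, an $\alpha$-synchronizer (Lemma~\ref{lem:alpha-synchronizer}) costs only $\tilde O(1)$ time and $\tilde O(1)\cdot m = \tilde O(m)$ messages, plus the $O(D)$ initialization — but $O(D)$ is within budget. For each higher level, the cluster graph is virtual: a ``cluster node'' is a set of real nodes with a rooted tree of depth $\tilde O(1/\beta) = \tilde O(1)$, and a single communication round of the cluster graph is simulated by convergecast/broadcast inside each cluster tree (Lemmas~\ref{lem:frag-broadcast}, \ref{lem:upcast}, \ref{lem:downcast}) along the $\tilde O(1)$-diameter fragment, followed by one real-edge message exchange across inter-cluster edges. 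Running $\MPX$ on the level-$i$ cluster graph then costs $\tilde O(1)$ cluster-graph rounds, each simulated in $\tilde O(1)$ real time and with message cost proportional to (total fragment size at that level) $+$ (number of inter-cluster real edges) $= \tilde O(n) + \tilde O(m) = \tilde O(m)$ per round; summed over $\tilde O(1)$ rounds and $O(\log D/\log\log n)$ levels this is still $\tilde O(m)$. Termination detection at each level is obtained because every sub-call ($\MPX$ via $\alpha$-synchronizer, fragment broadcast/upcast/downcast) has termination detection, so the level-$i$ leader cluster learns when level $i$ is done and can trigger level $i{+}1$; when the final $\polylog$-diameter graph is handled by a direct rooted BFS with convergecast of acks, $R$ detects global termination.

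The main obstacle I expect is the bookkeeping that keeps the message complexity at $\tilde O(m)$ through all levels while the clusters are virtual: each simulated cluster-graph round must route messages only along inter-cluster edges that actually exist and along fragment trees of total size $O(n)$, and one must argue that the same real edge is not over-charged across the $O(\log D/\log\log n)$ levels — this works because at each level the edge is used $\tilde O(1)$ times (the flooding depth), so the per-edge charge is $\polylog(n)$ total, but making this airtight requires carefully specifying how cluster IDs and the level-$i$ virtual adjacency are maintained at the real nodes, and how $R$ (a real node) is identified with its level-$i$ cluster so the recursion's ``pre-specified root'' precondition is met at every level. A secondary subtlety is the ``diameter is known'' assumption and the stopping condition: one must ensure the recursion halts at the right level (when the cluster-graph diameter has dropped to $\polylog(n)$), which I would handle by using Procedure~$\DiameterCalculation$ / $\TreeCounting$-style estimates on the cluster graph, and — as the theorem's last sentence promises — remove the global knowledge of $D$ by a standard doubling-guess wrapper, rerunning with geometrically increasing guesses of $D$ and verifying success via the built-in termination detection, which only inflates time and messages by a $\polylog$ (indeed $O(\log D)$) factor.
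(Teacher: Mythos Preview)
Your overall plan is essentially the paper's own approach: iterate $\MPX$ on a sequence of cluster graphs with $\beta=\ln^{-\Theta(1/\epsilon)} n$ for $\lceil\log_{1/(3\beta)} D\rceil$ phases, simulate each phase asynchronously via an $\alpha$-synchronizer between clusters together with broadcast/convergecast on the intra-cluster spanning trees, finish with a BFS on the final $\polylog(n)$-diameter cluster graph, and remove the knowledge of $D$ by a doubling wrapper with a coverage check.

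There is, however, a real gap in your time accounting. You write that at each higher level ``a cluster node is a set of real nodes with a rooted tree of depth $\tilde O(1/\beta)=\tilde O(1)$'' and that each cluster-graph round is ``simulated in $\tilde O(1)$ real time''. That is only true at level~$1$. At level $i$, a cluster is a union of level-$(i{-}1)$ clusters, and its spanning tree in the \emph{original} graph $G$ has depth $\Theta\big((\ln n/\beta)^{i-1}\big)$, not $\tilde O(1)$; this is the very same product you compute for the final tree depth. Consequently, simulating one round of $\MPX$ at level $i$ costs time on the order of $(\ln n/\beta)^{i-1}$, and by the last level this is $\tilde O(D^{1+\epsilon})$, not $\polylog(n)$. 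If your ``$\tilde O(1)$ per round'' claim were correct the whole construction would run in $\polylog(n)$ time, which is impossible. The fix is exactly what the paper does (Lemma~\ref{lem:diameterOfClustersAndClusterGraph} and the proof of Theorem~\ref{thm:spanningTreeConstruction}): bound the depth of the level-$i$ cluster trees in $G$ by $(\frac{5\ln n}{\beta})^{i}\le \tilde O(D^{1+\epsilon})$, then multiply by $\tilde O(1)$ rounds per phase and $i_{\max}=\tilde O(1)$ phases to get total time $\tilde O(D^{1+\epsilon})$.

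A related slip is in your depth arithmetic: you first (correctly) say the product is $(\log n/\beta)^{O(\log D/\log\log n)}=\tilde O(D^{1+\epsilon})$, but then assert it equals ``$D^{o(1)}\cdot\polylog(n)$, i.e.\ $\tilde O(D^{\epsilon})$''. With $\beta=\Theta(\log^{-c} n)$ the product is $D^{1+\Theta(1/c)}$, not $D^{o(1)}$; one must choose $c=\Theta(1/\epsilon)$ (the paper takes $\epsilon'\le \epsilon/(2\ln 15)$, i.e.\ $\beta=\ln^{-1/\epsilon'} n$) to land on $\tilde O(D^{1+\epsilon})$, and this same exponent inevitably shows up in the running time for the reason above.
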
 

\subparagraph*{Brief Description.} We construct the low diameter spanning tree in a two stage process. The first stage consists of building a sequence of increasingly coarser partitions of $G=(V,E)$. Each partition decomposes $V$ into disjoint node sets, called clusters, with strong diameter $\Tilde{O}(D^{1 + \epsilon})$; in fact, each cluster $C$ is spanned by a tree $\CLT(C)$ of depth $\Tilde{O}(D^{1 + \epsilon})$. (Unlike in Subsect.\ \ref{ss:MPX}, this spanning tree is oriented away from the root.) The unique cluster containing the root node $R$ will be denoted $C_R$. 
The cluster graph induced by the final partition (defined in Subsect.\ \ref{ss:MPX}) has diameter $\Tilde{O}(1)$. These partitions are obtained by simulating the synchronous MPX decomposition algorithm (see Subsect.\ \ref{ss:MPX}) on $G$, then on the obtained cluster graph, and so on, for $\imax = \lceil \log_{1/(3\beta)} D \rceil$ times (where $\beta = \ln^{-\frac{1}{\epsilon'}} n$ and $\epsilon' \leq 1$ is to be derived in the analysis). 
In the second stage, we construct a breadth first search (BFS) tree $\BFST$ over the final cluster graph of phase 1, where the cluster $C_R$ containing the pre-specified root $R$ serves as the root of the BFS tree. We then use $\BFST$ to decide which edges of the original graph should be kept to obtain the desired rooted spanning tree $\SPT$ of $G$ with depth $\Tilde{O}(D^{1 + \epsilon})$.

\subparagraph*{Detailed Description.}
Consider the initial graph $G(V,E) = G_0(V_0,E_0)$ and the initial trivial partition $\mathcal{P}_0$ in which each node $v \in V$ is its own cluster.
\begin{itemize}
    \item \textbf{Stage 1:} The first stage consists of $\imax = \lceil \log_{1/(3\beta)}D \rceil$ phases, where $\beta = \ln^{-1/\epsilon'} n$ and we assume $\ln \ln n \geq 2 \epsilon' \ln 3$. (If $\ln \ln n \leq 2 \epsilon' \ln 3 \leq 2 \ln 3$, then constructing a low diameter spanning tree efficiently is trivial.) Phase $i$ starts with a partition $\mathcal{P}_{i-1}$ of $V$ and the cluster graph induced by $\mathcal{P}_{i-1}$ is denoted by $G_{i-1}(V_{i-1},E_{i-1})$. We simulate one instance of Procedure~$\MPX$ (with parameter $\beta$) on $G_{i-1}$ in an asynchronous setting by running an $\alpha$-synchronizer between clusters, and within each cluster $C$, using the spanning tree $\CLT(C)$ to simulate the behavior of each cluster node of $V_{i-1}$. (Note that this well-known synchronizer is described in more detail in Section \ref{sec:toolbox}.) 
    More precisely, the root of the spanning tree $\CLT(C)$ simulate the behavior of cluster $C$ (in the simulated Procedure~$\MPX$). To send a (same) message to its adjacent clusters, $C$ broadcasts along $\CLT(C)$. To receive the message with the minimum ID (which is sufficient information for Procedure~$\MPX$), $C$ convergecasts along $\CLT(C)$.

    The output is a partition $\mathcal{P}_i^*$ of $V_{i-1}$ into disjoint (cluster node) sets $U_1,\ldots,U_t$ 
    such that each $U_j$ has a spanning tree $\SUPT_j$ of depth $O(\frac{\ln n}{\beta})$. We transform $\mathcal{P}_i^*$ into a partition $\mathcal{P}_i$ of $V$, the node set of the \emph{original} graph, into disjoint node sets $W_1,\ldots,W_t$, such that each $W_j$ has a spanning tree $\CLT(W_j)$ of depth $O((\frac{\ln n}{\beta})^i)$. (In fact, we only show how to compute the spanning trees $\CLT(W_j)$, which induces the node sets $W_j$.) 
    
    To transform $\mathcal{P}_i^*$ to $\mathcal{P}_i$, we use a simple Procedure~$\Transform$, sketched next. Recall that each cluster node in $U_j$ keeps information about its parent in the spanning tree $\SUPT_j$. Procedure~$\Transform$ consists of $2\frac{\ln n}{\beta}$ iterations. Each cluster node keeps an iteration counter and these counters are kept locally synchronized by running an $\alpha$-synchronizer between cluster nodes. In the first iteration, the root cluster node $C_R$ sends its ID to each adjacent cluster node $C$ (which is its child in $\SUPT_j$) over the edges of the set $E_{inter}=\{(u,w)\in E(G) \mid u\in C_R, w\in C\}$, namely, all (original) inter-cluster edges between $C_R$ and $C$. (Note that in fact, $C_R$ sends its ID to all adjacent cluster nodes, but cluster nodes which are not children of $C_R$ simply ignore that message.) Among these inter-cluster edges, every child cluster node $C$ keeps $(u^*,w^*) = argmin_{(u,w) \in E_{inter}} \{id_w\}$, i.e., the edge whose endpoint $w$ in $C$ has the minimum ID. 
    
    Cluster node $C$ then reorients its tree $\CLT(C)$ to be rooted in $w$ (and the \intercluster\ edge is oriented towards $w$, i.e., from parent to child). 
    In the next iteration, each $C$ sends the ID of $R$ to its children cluster nodes, if they exist, which in turn reorient their tree in the same fashion. After all iterations are done, the ``combined'' spanning tree $\CLT(W_j)$ is completed, and a simple broadcast allows all nodes in the newly computed cluster $W_j$ to move on to the next phase. (Note that $\CLT(W_j)$ is oriented from the root outwards.)

    \item \textbf{Stage 2:} At the end of stage 1, the final partition decomposes $V$ into clusters with strong diameter $\tilde{O}(D^{1+\epsilon})$ and induces a cluster graph $G_f(V_f,E_f)$ of diameter $O(\log^{2+4/\epsilon'} n)$; in fact, each cluster $C$ is spanned by a tree $\CLT(C)$ of depth $\tilde{O}(D^{1+\epsilon})$. During stage 2, the naive synchronous BFS tree construction algorithm (based on flooding, see \cite{peleg-locality}) is simulated on $G_f$ for $O(\log^{2+4/\epsilon'} n)$ rounds, where the designated root in $V_f$ is the cluster $C_R$ that contains the pre-specified root in $V$. 
    Once again, this is done by running an $\alpha$-synchronizer between clusters, and within each cluster, using the spanning tree $\CLT(C)$ to simulate the behavior of each cluster node $C$. After computing the BFS tree $\BFST$ on $G_f$, we use Procedure~$\Transform$---but this time for $O(\log^{2+4/\epsilon'} n)$ rounds---to compute a spanning tree $\SPT$ of $G$, similarly to stage 1.
    This final output $\SPT$ is a $\tilde{O}(D^{1+\epsilon})$ diameter spanning tree of $G$.
\end{itemize}

\subparagraph*{Analysis.} Lemma \ref{lem:diameterOfClustersAndClusterGraph} upper bounds, for each phase, the diameter of the cluster graph as well as that of the partition's clusters. Corollary \ref{cor:finalClusterGraph} is obtained from Lemma \ref{lem:diameterOfClustersAndClusterGraph} by considering the last phase. After which, we prove Theorem \ref{thm:spanningTreeConstruction} using Lemma \ref{lem:diameterOfClustersAndClusterGraph} and Corollary \ref{cor:finalClusterGraph}. 

\begin{lemma}
\label{lem:diameterOfClustersAndClusterGraph}
For each phase $1 \leq i \leq \imax$, (1) $\diam(G_{i-1})=\max\{ (3 \beta)^{i-1} D, O(\log^{2+4/\epsilon'} n)\}$ w.h.p., and (2) each cluster $C$ of the partition $\mathcal{P}_{i-1}$ is spanned (in the original graph $G$) by a tree $\CLT(C)$
with $\diam(\CLT(C)) = (\frac{5 \ln n}{\beta})^{i-1}$. 
\end{lemma}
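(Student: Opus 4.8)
The plan is to prove both parts at once by induction on the phase index $i$, reading the statement for phase $i$ as an assertion about the input partition $\mathcal{P}_{i-1}$ and the cluster graph $G_{i-1}$, and letting the inductive step analyze what phase $i$ does to produce $\mathcal{P}_i$ and $G_i$. First I would pin down the high-probability ``good event'' on which everything rests: in each of the $\imax=O(\log n)$ invocations of Procedure~$\MPX$, every node draws $\lfloor\delta_v\rfloor\le\delta_{max}$ (so, exactly as in the proof of Lemma~\ref{lem:mpxDecomp}, each produced cluster is spanned by a tree of depth at most $\frac{2\ln n}{\beta}$ in the graph on which $\MPX$ is run), and the diameter guarantee of Lemma~\ref{lem:MPXDiameter} holds in each invocation. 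Each failure probability is $1/\poly(n)$; taking the constant $k$ of Lemma~\ref{lem:MPXDiameter} large enough and union-bounding over the $O(\log n)$ phases keeps the total failure probability at $1/\poly(n)$. One modelling point belongs here: Lemmas~\ref{lem:mpxDecomp} and~\ref{lem:MPXDiameter} are stated for \emph{synchronous} MPX, whereas in phase $i$ it is run asynchronously via an $\alpha$-synchronizer over the cluster trees $\CLT(\cdot)$; since that simulation computes exactly the partition a synchronous execution on $G_{i-1}$ would, both lemmas apply verbatim to the output $\mathcal{P}_i^*$. The base case $i=1$ is immediate: $G_0=G$ has diameter $D=(3\beta)^0D\le\max\{(3\beta)^0D,\,O(\log^{2+4/\epsilon'}n)\}$, and each cluster of $\mathcal{P}_0$ is a single node, trivially spanned by a tree of diameter $0\le(\frac{5\ln n}{\beta})^0$.

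For the inductive step of part~(2) I would analyze Procedure~$\Transform$ purely combinatorially. The tree $\CLT(W_j)$ is formed from an MPX spanning tree $\SUPT_j$ of some $U_j\in\mathcal{P}_i^*$ by replacing each cluster node $C\in U_j$ with its tree $\CLT(C)$ --- re-rooted at the endpoint in $C$ of the chosen incoming inter-cluster edge --- and joining consecutive clusters by that single inter-cluster edge. Two facts drive the bound: (i) under the good event $\SUPT_j$ has depth at most $\frac{2\ln n}{\beta}$ (Lemma~\ref{lem:mpxDecomp} on $G_{i-1}$), so any root-to-leaf path of $\CLT(W_j)$ meets at most $\frac{2\ln n}{\beta}+1$ of the old clusters with one inter-cluster edge between consecutive ones; and (ii) re-rooting a tree at any vertex makes its depth at most its diameter, which by the induction hypothesis is at most $(\frac{5\ln n}{\beta})^{i-1}$. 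Hence
$$depth(\CLT(W_j)) ~\le~ \Bigl(\tfrac{2\ln n}{\beta}+1\Bigr)\Bigl(\tfrac{5\ln n}{\beta}\Bigr)^{i-1}+\tfrac{2\ln n}{\beta},$$
and since a tree's diameter is at most twice its depth, $\diam(\CLT(W_j))\le 2\bigl(\tfrac{2\ln n}{\beta}+1\bigr)\bigl(\tfrac{5\ln n}{\beta}\bigr)^{i-1}+\tfrac{4\ln n}{\beta}$. A short calculation reduces ``this is at most $(\tfrac{5\ln n}{\beta})^{i}$'' to ``$(\tfrac{\ln n}{\beta}-2)(\tfrac{5\ln n}{\beta})^{i-1}\ge\tfrac{4\ln n}{\beta}$'', which holds because $\tfrac{\ln n}{\beta}=\ln^{1+1/\epsilon'}n\ge\ln^2 n$ (using $\epsilon'\le1$) exceeds a small absolute constant for all but trivially small $n$. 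The base $5$, rather than a value near $4$, is chosen precisely so that this last step has the slack to absorb the additive $+1$ and $+\tfrac{2\ln n}{\beta}$ terms.

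For the inductive step of part~(1), recall that $G_i$ is the cluster graph induced on $G_{i-1}$ by $\mathcal{P}_i^*$, and use two monotonicity facts: projecting a shortest path of $G_{i-1}$ onto clusters gives a walk in $G_i$ of no greater length, so $\diam(G_i)\le\diam(G_{i-1})$ always; and, if $\diam(G_{i-1})\ge k\frac{\ln^2 n}{\beta^4}$, then $\diam(G_i)\le 2\beta\,\diam(G_{i-1})$ by Lemma~\ref{lem:MPXDiameter}. Fix the hidden constant $\phi$ in the $O(\log^{2+4/\epsilon'}n)$ floor to be at least $k\frac{\ln^2 n}{\beta^4}=k\log^{2+4/\epsilon'}n$ (recall $\beta^{-4}=\ln^{4/\epsilon'}n$). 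If $\diam(G_{i-1})\ge k\frac{\ln^2 n}{\beta^4}$, then by the induction hypothesis and Lemma~\ref{lem:MPXDiameter}, $\diam(G_i)\le 2\beta\max\{(3\beta)^{i-1}D,\,\phi\}=\max\{2\beta(3\beta)^{i-1}D,\,2\beta\phi\}\le\max\{(3\beta)^iD,\,\phi\}$, using $2\beta\le3\beta$ and $2\beta\le1$. Otherwise $\diam(G_{i-1})<k\frac{\ln^2 n}{\beta^4}\le\phi$, and the first fact gives $\diam(G_i)\le\phi\le\max\{(3\beta)^iD,\,\phi\}$. In both cases the induction closes. (The ``$=$'' in the statement is thus really an upper bound: merging clusters can only shrink the cluster graph, and once the diameter drops below the $\polylog$ floor Lemma~\ref{lem:MPXDiameter} no longer forces further shrinkage.)

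I expect the main obstacle to be part~(2): tracking how the re-rooting step in $\Transform$ can inflate a subtree's depth all the way up to its diameter, and then checking that the resulting recursion closes with the stated base $5$ --- i.e., that $5$ leaves exactly enough room over the per-phase blow-up $2(\tfrac{2\ln n}{\beta}+1)$ to also swallow the additive terms from the extra cluster on a root-to-leaf path and the inter-cluster edges. Part~(1) is a clean two-case argument layered on the two monotonicity facts, and the per-phase union bound is routine.
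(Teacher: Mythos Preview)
Your approach is essentially the paper's: induction on $i$, with part~(1) driven by Lemma~\ref{lem:MPXDiameter} and part~(2) by a combinatorial analysis of how Procedure~$\Transform$ stitches the per-cluster trees along the MPX super-tree. Your treatment of part~(1) is in fact more careful than the paper's --- the paper simply cites Lemma~\ref{lem:MPXDiameter} and asserts the bound, whereas you correctly split into the two cases (diameter above or below the $k\ln^2 n/\beta^4$ threshold) and use the monotonicity observation $\diam(G_i)\le\diam(G_{i-1})$ to handle the ``below'' case; that case analysis is genuinely needed and is only implicit in the paper.

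There is one small arithmetic slip in your part~(2) recursion. Your reduced inequality $(\tfrac{\ln n}{\beta}-2)(\tfrac{5\ln n}{\beta})^{i-1}\ge\tfrac{4\ln n}{\beta}$ fails for $i=1$ regardless of $n$: it becomes $\tfrac{\ln n}{\beta}-2\ge\tfrac{4\ln n}{\beta}$, i.e.\ $-2\ge\tfrac{3\ln n}{\beta}$. The issue is that you plugged the inductive upper bound $(\tfrac{5\ln n}{\beta})^{0}=1$ for the diameter of the old (singleton) clusters, then doubled depth to diameter; the resulting $\tfrac{8\ln n}{\beta}+2$ overshoots $\tfrac{5\ln n}{\beta}$. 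The fix is immediate: treat $i=1$ separately, observing that old clusters are single vertices so $\CLT(W_j)$ is exactly the MPX tree on $G_0$ and has diameter at most $\tfrac{4\ln n}{\beta}\le\tfrac{5\ln n}{\beta}$. For $i\ge 2$ your inequality does hold (it reduces to $\tfrac{\ln n}{\beta}\ge 14/5$), so the induction closes from there. The paper sidesteps this by bounding the diameter of the combined tree directly as $(\tfrac{4\ln n}{\beta}+1)(\tfrac{5\ln n}{\beta})^{i-1}$ rather than going through depth-then-double; this is slightly looser about the inter-cluster edges but keeps the leading coefficient at $4$ rather than your $2\cdot 2=4$ plus the extra $+2$, and so squeezes under $5$ for all $i\ge 1$.
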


\begin{proof}
By induction on $i$. The base case, $i = 1$, holds trivially. 

Next, consider some $i \geq 1$ for which the inductive hypothesis holds, i.e., $\diam(G_{i-1})=\max\{ (3 \beta)^{i-1} D, O(\log^{2+4/\epsilon'} n)\}$ w.h.p. and each cluster node $C$ of the partition $\mathcal{P}_{i-1}$ is spanned (in the original graph $G$) by a tree $\CLT(C)$ with $\diam(\CLT(C))= (\frac{5 \ln n}{\beta})^{i-1}$. Running Procedure~$\MPX$ on $G_{i-1}$ yields a $(2\beta, \frac{4 \ln n}{\beta})$ low-diameter decomposition of $G_{i-1}$. In fact, each super cluster $C'$ of this decomposition on $G_{i-1}$ is spanned (in the cluster graph $G_{i-1}$) by a tree $\CLT(C')$ of diameter $\frac{4 \ln n}{\beta}$. 
Hence, the ``combined'' spanning tree computed by Procedure~$\Transform$ for the ``analog'' $C''$ of cluster $C'$ on $G$, which is a cluster of the newly constructed $G_i$, has diameter $\diam(\CLT(C''))= (\frac{4 \ln n}{\beta} + 1) \cdot (\frac{5 \ln n}{\beta})^{i-1} \leq  (\frac{5 \ln n}{\beta})^{i}$. Next, the diameter of $G_i$ is the same as that of the cluster graph $H$ induced by partition $\mathcal{P}_i^*$. By Lemma \ref{lem:MPXDiameter}, the diameter of $H$ is $\max\{ (3 \beta)^{i} D, O(\log^{2+4/\epsilon'} n)\}$ w.h.p., and thus the lemma statement holds.
\end{proof}

\begin{corollary}
\label{cor:finalClusterGraph}
At the end of phase $\imax$, (1) $\diam(G_{\imax})= O(\log^{2+4/\epsilon'} n)$ w.h.p., and (2) each cluster $C$ of the partition $\mathcal{P}_{\imax}$ is spanned (in the original graph $G$) by a tree $\CLT(C)$ with $\diam(\CLT(C))= \tilde{O}(D^{1+\epsilon})$. 
\end{corollary}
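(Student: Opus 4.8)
The plan is to obtain Corollary~\ref{cor:finalClusterGraph} simply as the instance $i=\imax$ of Lemma~\ref{lem:diameterOfClustersAndClusterGraph} (equivalently, reading off the conclusion of its inductive step applied at the last phase), and then to substitute the concrete values $\imax=\lceil\log_{1/(3\beta)}D\rceil$ and $\beta=\ln^{-1/\epsilon'}n$ and simplify. Concretely, Lemma~\ref{lem:diameterOfClustersAndClusterGraph} gives $\diam(G_{\imax})=\max\{(3\beta)^{\imax}D,\;O(\log^{2+4/\epsilon'}n)\}$ w.h.p.\ and $\diam(\CLT(C))\le(5\ln n/\beta)^{\imax}$ for every cluster $C$ of $\mathcal{P}_{\imax}$, so only two elementary estimates remain. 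This proof is also where I would finally pin down the parameter $\epsilon'$ as a constant depending on $\epsilon$ (it is ``derived in the analysis'' precisely here).

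For part~(1): since $\beta=\ln^{-1/\epsilon'}n$ is small we have $3\beta<1$, so $x\mapsto(3\beta)^x$ is decreasing; as $\imax\ge\log_{1/(3\beta)}D$ this gives $(3\beta)^{\imax}\le 1/D$, hence $(3\beta)^{\imax}D\le 1\le O(\log^{2+4/\epsilon'}n)$. Thus the maximum in the expression for $\diam(G_{\imax})$ is attained by the second term, yielding $\diam(G_{\imax})=O(\log^{2+4/\epsilon'}n)$ w.h.p., as claimed.

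For part~(2): the key rewriting is $5\ln n/\beta = 15\ln n\cdot(3\beta)^{-1}$, so $(5\ln n/\beta)^{\imax}=(15\ln n)^{\imax}\cdot(3\beta)^{-\imax}$. Using $\imax<\log_{1/(3\beta)}D+1$, the second factor is at most $(3\beta)^{-1}D=(D/3)\ln^{1/\epsilon'}n=\tilde O(D)$. For the first factor I would invoke the standing assumption $\ln\ln n\ge 2\epsilon'\ln 3$ to bound $\ln(1/(3\beta))=\frac1{\epsilon'}\ln\ln n-\ln 3\ge\frac1{2\epsilon'}\ln\ln n$, hence $\imax\le\frac{2\epsilon'\ln D}{\ln\ln n}+1$; taking logarithms, $\ln\big((15\ln n)^{\imax}\big)=\imax(\ln\ln n+\ln 15)\le 2\epsilon'\ln D\,(1+o(1))+(1+o(1))\ln\ln n$, so $(15\ln n)^{\imax}=\tilde O(D^{2\epsilon'(1+o(1))})$. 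Combining the two factors gives $\diam(\CLT(C))=\tilde O(D^{\,1+2\epsilon'(1+o(1))})$, and since $\epsilon'$ is ours to choose (subject to $\epsilon'\le 1$ and $\ln\ln n\ge 2\epsilon'\ln 3$, which holds for all large $n$, the small-$n$ case being trivial as already noted) I would set $\epsilon'=\epsilon/3$, so that $2\epsilon'(1+o(1))\le\epsilon$ for $n$ large and $\diam(\CLT(C))=\tilde O(D^{1+\epsilon})$. The only delicate point is this last bookkeeping: one must track the polylogarithmic and $o(1)$ corrections in the exponent of $D$ carefully enough to be sure the leading $D^{2\epsilon'}$ together with the stray $\tilde O(D)$ and the $(15\ln n)^{\imax}$ contribution still fit under $D^{1+\epsilon}$ — which is what forces $\epsilon'\approx\epsilon/3$ rather than $\epsilon/2$. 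There is no conceptual obstacle beyond this estimate.
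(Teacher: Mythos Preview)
Your proposal is correct and follows essentially the same approach as the paper: invoke Lemma~\ref{lem:diameterOfClustersAndClusterGraph} at $i=\imax$, use $(3\beta)^{\imax}\le 1/D$ for part~(1), and for part~(2) bound $(5\ln n/\beta)^{\imax}$ by elementary manipulation of the exponent, fixing $\epsilon'$ at the end. The only cosmetic difference is in how the algebra for part~(2) is organized: the paper works directly with $\exp\bigl(\imax\ln(5\ln^{1+1/\epsilon'}n)\bigr)$ and bounds the fraction $\frac{\ln 5+\ln 3+\ln\ln n}{\frac{1}{\epsilon'}\ln\ln n-\ln 3}$ by $2\epsilon'\ln 15$, arriving at the clean choice $\epsilon'\le\epsilon/(2\ln 15)$ with no $o(1)$ in the exponent, whereas your factorization $(15\ln n)^{\imax}(3\beta)^{-\imax}$ carries a $(1+o(1))$ and leads you to $\epsilon'=\epsilon/3$; both are fine.
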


\begin{proof}
By Lemma \ref{lem:diameterOfClustersAndClusterGraph} (and applying one extra induction step), the diameter of $G_{\imax}$ is $D_f = \max\{ (3 \beta)^{\imax} D, O(\log^{2+4/\epsilon'} n)\}$ and each cluster $C$ of the partition $\mathcal{P}_{\imax}$ is spanned in $G$ by a tree $\CLT(C)$ of depth $d_f = (\frac{5 \ln n}{\beta})^{\imax}$. Since $\imax = \lceil \log_{1/(3\beta)} D \rceil$, we have that $(3 \beta)^{\imax} \leq 1/D$, so $D_f = O(\log^{2+4/\epsilon'} n)$. Moreover, by going through the computations, we get: 
\begin{align*}
d_f &=~  \exp(\imax \ln(5 \ln^{1+1/\epsilon'} n)) 
~\leq~  (5 \ln^{1+1/\epsilon'} n)  \exp\left(\frac{\ln D \ln(5 \ln^{1+1/\epsilon'} n)}{\ln(\frac{1}{3} \ln^{1/\epsilon'} n) }\right)  
&\\
&=~ (5 \ln^{1+1/\epsilon'} n)   \exp\left(\ln D \cdot \frac{\ln 5 + (1+1/\epsilon') \ln\ln n}{\frac{1}{\epsilon'} \ln\ln n - \ln 3}\right) &\\
&=~  (5 \ln^{1+1/\epsilon'} n)  \exp\left(\ln D \cdot \left(1 + \frac{\ln 5 + \ln 3 + \ln\ln n}{ \frac{1}{\epsilon'} \ln\ln n - \ln 3}\right)\right)
&\\
&\leq~  (5 \ln^{1+1/\epsilon'} n) \exp(\ln D \cdot (1 + 2\epsilon' \ln 15 )) 
~\leq~  (5 \ln^{1+1/\epsilon'} n) ~ D^{1 + \epsilon}~, 
\end{align*}
where, in order to make the last inequality hold, Procedure~$\SpanningTreeConstruction(\epsilon)$ selects $\epsilon' \leq \epsilon / (2\ln 15)$.
\end{proof}

\begin{proof}[Proof of Theorem \ref{thm:spanningTreeConstruction}]
The correctness of the first stage follows from that of the simulation (using an $\alpha$-synchronizer between clusters), Procedure~$\MPX$ and Procedure~$\Transform$. Next, let us show the time and message complexity of the first stage.
During each phase $1 \leq i \leq \imax$, Procedure~$\MPX$ is simulated on $G_{i-1}$ for $O(\frac{\log n}{\beta}) = \tilde{O}(1)$ rounds. Hence, each cluster $C$ simulates $\tilde{O}(1)$ rounds. In each round, the cluster broadcasts once over the cluster's spanning tree $\CLT(C)$, sends one message per inter-cluster edge over to adjacent clusters, and convergecasts once over $\CLT(C)$. By Lemma \ref{lem:diameterOfClustersAndClusterGraph}, $\CLT(C)$ has depth $\tilde{O}(D^{1+\epsilon})$. Hence, each round of Procedure~$\MPX$ is simulated in at most $\tilde{O}(D^{1+\epsilon})$ time and using $O(m)$. Adding up over all phases results in $\tilde{O}(D^{1+\epsilon})$ time and $\tilde{O}(m)$ messages.
Note that running an $\alpha$-synchronizer (between the clusters) induces only an $\tilde{O}(1)$ message overhead per (\intercluster) edge over all rounds, but no time overhead. Thus Procedure~$\MPX$ is simulated in $\tilde{O}(D^{1+\epsilon})$ time and using $\tilde{O}(m)$ messages. 
Similarly, in Procedure~$\Transform$, each cluster $C$ simulates $\tilde{O}(1)$ rounds. In each round, the cluster broadcasts twice over the cluster's spanning tree $\CLT(C)$, sends one message per inter-cluster edge over to adjacent clusters, and convergecasts twice over $\CLT(C)$ (where the additional broadcast and convergecast allows to reorient $\CLT(C)$). Therefore, it can be seen that Procedure~$\Transform$ also takes $\tilde{O}(D^{1+\epsilon})$ time and uses $\tilde{O}(m)$ messages.
Finally, the first stage has at most $\imax = \tilde{O}(1)$ phases, and thus takes $\tilde{O}(D^{1+\epsilon})$ time and uses $\tilde{O}(m)$ messages.

By Corollary \ref{cor:finalClusterGraph}, the final cluster graph has a diameter of $O(\log^{2+4/\epsilon'} n)$. Given that, the correctness of the second stage follows from that of the simulation (using an $\alpha$-synchronizer between clusters), the naive synchronous BFS tree construction algorithm and Procedure~$\Transform$. As for the time and message complexity, the same approach (used for stage 1 above) shows that the second stage takes $\tilde{O}(D^{1+\epsilon})$ time and uses $\tilde{O}(m)$ messages. 
\end{proof}

\textbf{Removing the Requirement of the Knowledge of $D$.}
In the previously described algorithm, we assumed that each node knew the value of $D$, the diameter of the original graph. This assumption can be removed by having each node guess the value of $D = 2^1, 2^2, \ldots$ until we arrive at the correct guess (an at most $2$-approximation of $D$). 

An issue that must be addressed, however, is that nodes need some way to determine whether they have correctly guessed the value of $D$ or not. This can be done at the end of the second stage. Recall that the naive synchronous BFS tree construction is simulated for $O(\log^{2+4/\epsilon'} n) = \Tilde{O}(1)$ rounds. If the estimate of $D$ is too small, the cluster graph obtained at the end of the first stage, $G_{f}$, may have diameter strictly greater than $O(\log^{2+4/\epsilon'} n)$, in which case $\BFST$ may not cover the whole graph $G_{f}$. As a result, once $\SPT$ is constructed from $\BFST$ using Procedure~$\Transform$, some nodes may exist outside the spanning tree $\SPT$. This condition can be detected by the leaves of $\SPT$ and a simple convergecast can be used to check if this condition holds true. In case it does, the root of $\SPT$ can initiate a broadcast over the entire original graph to update the guess of $D$ and run the algorithm with this updated guess. (Note that if the estimate of $D$ is too small, it may still happen that $\BFST$ covers the whole graph $G_{f}$, in which case we correctly compute a low diameter spanning tree $\SPT$ of $G$ and the algorithm terminates.)

This modification increases the time complexity of the algorithm by at most a constant factor, and its message complexity by a factor of at most $O(\log D)$.

\section{The Asynchronous MST Algorithm}
\label{sec:mst}
In this section, we develop a randomized algorithm to construct an MST with high probability for a given graph in $\Tilde{O}(D^{1+\epsilon} + \sqrt{n})$ time with high probability and $\Tilde{O}(m)$ messages with high probability (for any constant $\epsilon > 0$). 

\subsection{High-level Overview of the Algorithm}
\label{sec:high-level}
We implement on an asynchronous network 
a variant of the singularly near optimal {\em synchronous} MST algorithms of~\cite{Elkin17,PanduranganRS17}. 
The algorithm can be divided into three stages. In stage I, we pre-process
the network so that subsequent processes 
are fast and message efficient. Stages II and III correspond to the actual MST algorithm.

 In order to ensure that nodes participate in this multi-stage algorithm in the proper sequence, we append a constant number of bits to each message to indicate the stage number that message corresponds to. A node $u$ knows which stage number it is currently in and can queue received messages that belong to a later stage. These messages will be processed later, once $u$ reaches to the corresponding stage. 

\textbf{Stage I: Pre-Processing the Graph.}
In this stage, we run a few preparatory procedures on the graph.
Specifically, we 
first elect a leader, 
then construct a low diameter spanning tree $\BFSTree$, and finally estimate the diameter of 
$\BFSTree$. 
In more detail, for the first stage we utilize the singularly (near) optimal algorithm of~\cite{KMPP21} to elect a unique leader $\GraphLeader$ in $O(D + \log^2 n)$ time and $O(m \log^2 n)$ messages. 
Subsequently, we run the $\SpanningTreeConstruction(\epsilon)$ algorithm of Section \ref{sec:lowDiamSpanningTree} (for a constant parameter $1 \geq \epsilon > 0$) to construct a low diameter spanning tree $\BFSTree$ on $G$ rooted at $\GraphLeader$. Then, we use a known application of the Wave\&Echo technique (see, e.g., \cite{segall1983distributed,Tel}) to have the root calculate the diameter of the constructed spanning tree $D'$, which we know is an $\tilde{O}(D^\epsilon)$ approximation of the diameter $D$ of the original graph $G$, in $O(D')$ time and $O(n)$ messages. 
Finally, all nodes in the tree participate in a simple broadcast on the spanning tree $\BFSTree$ to send this knowledge of $D'$ to all nodes in the graph in $O(D')$ time and $O(n)$ messages. 

\textbf{Stage II: $\ControlledGHS$.}
The 
$\ControlledGHS$ algorithm, introduced in \cite{garay-sublinear,kutten-domset}, 
is a \emph{synchronous} version of the classical Gallager-Humblet-Spira (GHS) algorithm~\cite{GallagerHS83,peleg-locality} with some modifications, aiming to balance the size and diameter of the resulting fragments.
Here, we convert to the asynchronous setting a variant of the (synchronous) $\ControlledGHS$ as described in \cite{PanduranganRS17,pandurangan2019time}.

Recall that the synchronous GHS algorithm
(see, e.g., \cite{peleg-locality})
consists of $O(\log n)$ phases. In the initial phase, each node is an {\em MST fragment}, by which we mean a connected subgraph of the MST.
In each subsequent phase, every MST fragment finds a minimum-weight outgoing edge (MOE)---these edges are guaranteed to be in the MST
\cite{tarjan}. The MST fragments are merged via the MOEs to form larger fragments.
The number of phases is $O(\log n)$, since the number of MST fragments gets at least halved in each phase.
The message complexity is $O(m + n \log n)$, which is essentially optimal,
and the time complexity is $O(n \log n)$. Unfortunately, the time complexity of the GHS algorithm is not optimal, because much of the communication during a phase uses {\em only the MST fragment edges}, and the diameter of an MST fragment can be significantly larger than the graph diameter $D$ (possibly as large as $\Omega(n)$). 

In order to obtain a time-optimal algorithm,
the $\ControlledGHS$ algorithm controls the growth of the diameter of the MST fragments during merging.
This is achieved by computing, in each phase, a maximal matching
on the fragment forest with additional edges being carefully chosen to ensure enough fragments merge together, and merging fragments accordingly. 
Each phase essentially reduces the number of fragments
by a factor of two, while not increasing the  diameter of any fragment by more than a factor of two. Since the number of phases
of $\ControlledGHS$ is capped at $\max \lbrace \lceil \log_2 \sqrt{n} \rceil, \lceil \log_2 D' \rceil \rbrace$, it produces 
at most $\min \lbrace\sqrt{n}, n/D' \rbrace$ fragments, each of which has diameter $O(D' +  \sqrt{n})$.  These are called {\em base fragments}.
$\ControlledGHS$ up to phase $\max \lbrace \lceil \log_2 \sqrt{n} \rceil, \lceil \log_2 D' \rceil \rbrace$ can be implemented using $\tilde{O}(m)$ messages in
$\tilde{O}(D' + \sqrt{n} )$ rounds in a synchronous network.

Stage II executes
the $\ControlledGHS$ algorithm in an asynchronous network.
We postpone the discussion of the technical details involved in efficiently implementing the asynchronous algorithm to Section~\ref{sec:mst-detail}. 
The main challenge, however, is that the synchronous version heavily relies on the phases being synchronized.
Here, we cannot naively use a synchronizer (such as $\alpha$) for synchronization, as it would have increased
the message complexity substantially. Instead we use a light-weight synchronization that incurs only $\tilde{O}(m)$ overhead
in messages. 

Finally, we ensure that all nodes know the exact number of fragments that were constructed at the end of this phase. The root of each fragment $T$ calculates the number of nodes present in $T$ and forms a tuple consisting of this value and the ID of $T$. Subsequently, each fragment root participates in the upcast of its tuple in the low diameter spanning tree $\BFSTree$ on $G'$. All tuples are accumulated at $\GraphLeader$ in $O(\min \lbrace\sqrt{n}, n/D' \rbrace + D')$ time and $O(n)$ messages. $\GraphLeader$ continues to listen for messages until the total number of nodes in all fragments it has heard from is equal to $n$, i.e., all fragments have been heard from. 
 Now $\GraphLeader$ broadcasts the number of fragments 
over $\BFSTree$ 
 to all nodes in the graph in $O(D')$ time and $O(n)$ messages.

\textbf{Stage III: Merging the Remaining Fragments.}
This stage 
completes the fragment merging process.
However, the merging is done in a ``soft'' manner. The at most $\min \lbrace\sqrt{n}, n/D' \rbrace$ base fragments (constructed at the end of Stage II) are still retained, but each base fragments takes on an additional ID--a cluster ID, initially set to the base fragment ID. (A cluster is a collection of base fragments; at the beginning of this stage, each base fragment forms its own cluster.) Each base fragment finds an MOE to a different cluster, if such an MOE exists, and merging consists of base fragments modifying their associated cluster IDs and marking the corresponding MOE connecting clusters. 
All nodes participate in a simple upcast over $\BFSTree$, where the root of each base fragment is responsible to send up a tuple consisting of its fragment \& cluster IDs, a possible MOE and the associated fragment \& cluster IDs the MOE leads to.\footnote{It is required that each base fragment's root sends up this tuple even if it does not have an MOE (in which case the tuple only has info on the fragment ID and cluster ID of the base fragment). This is to ensure that the nodes detect termination as the root of $\BFSTree$, $\GraphLeader$, already knows the fragment and cluster IDs of the base fragments so it knows how many such messages to wait for.}
It is similar to the approach of \cite{Elkin17,PanduranganRS17}, which uses a BFS tree to upcast these values to the root of tree; here, instead of BFS, we use the
low-diameter spanning tree of Section \ref{sec:lowDiamSpanningTree}. Subsequently, the root calculates the appropriate MOEs (and the fragments they connect and the clusters they lead to) for each cluster and downcast these values. 
Each fragment then performs a broadcast of its (possibly new) cluster ID over the fragment tree (to all nodes within the fragment). This process is repeated for $O(\log n)$ phases until only one cluster remains, which represents the MST of the original graph.

Let us examine each phase $i$ in more detail. Each base fragment finds its respective MOE, if any, and sends it to $\GraphLeader$ via an upcast.\footnote{Note that as the algorithm progresses, two adjacent base fragments may belong to the same overall cluster, possibly resulting in one 
of those base fragments having no MOE to a different cluster.} All fragment leaders can find their MOEs in $O(D' + \sqrt{n})$ time and $O(m)$ messages. Upcasting these values to $\GraphLeader$ using tree $\BFSTree$ takes $O(\min \lbrace\sqrt{n}, n/D' \rbrace + D')$ time and $O(n)$ messages. $\GraphLeader$ locally computes the overall MOEs of the (soft-merged) base fragments
 and then merges them (locally). 
 Subsequently, all nodes of $\BFSTree$ participate in a downcast of these MOEs and modified cluster IDs (that $\GraphLeader$ previously calculated) in $O(D' + \sqrt{n})$ time and $O(n)$ messages. Each base fragment performs a broadcast of its (possibly new) cluster ID to all nodes in its base fragment utilizing the base fragment tree. For all base fragments to do this, it takes a total of $O(D' + \sqrt{n})$ time and $O(n)$ messages.

\subsection{Detailed Algorithm Description}
\label{sec:mst-detail}
We now look at each stage in more detail.

\textbf{Stage I.} In this stage, the nodes first run Procedure~$\LeaderElection$ on $G$ to elect a unique leader $\GraphLeader$ with high probability. As a side benefit, the procedure also wakes up all nodes. Next, the nodes participate in Procedure~$\SpanningTreeConstruction(\epsilon)$ to construct an $\tilde{O}(D^{1+\epsilon})$ diameter spanning tree $\BFSTree$ of $G$ with $\GraphLeader$ as its root. Subsequently, all nodes participate in Procedure~$\DiameterCalculation$ so that $\GraphLeader$ is now aware of the diameter $D'$ of $\BFSTree$. Finally, all nodes participate in $\FragmentBroadcast$ over $\BFSTree$ to transmit this information of $D'$ to all nodes in the graph. (Procedures~$\LeaderElection$, $\DiameterCalculation$ and $\FragmentBroadcast$ are described in Section \ref{sec:toolbox}.)

\textbf{Stage II.} In this stage, the nodes execute 
an asynchronous version of the 
$\ControlledGHS$ algorithm \cite{garay-sublinear,PanduranganRS17,pandurangan2019time}.  
Let us first recall the original (synchronous) $\ControlledGHS$ algorithm. This algorithm merges fragments (subtrees of the MST) in phases, similarly to GHS. However, it guarantees two additional properties to hold at the end of each phase $i$: (a) there are at most $n/2^i$ fragments, and (b) 
each fragment has diameter $O(2^i)$. These guarantees are ensured through two measures. First, at the beginning of phase $i$, only fragments with diameter $\leq 2^i$ will participate in this phase and find MOEs. Second, in a phase $i$, consider the {\em fragment graph} whose ``nodes'' are the fragments (including those that do not participate) and whose edges are all the MOEs found. The algorithm first performs a maximal matching on this fragment graph and removes from the fragment graph edges that do not participate in this matching. 
Then, those fragments who participate in this phase and remain unmatched add their MOEs back to the fragment graph. Connected components of fragments in this final fragment graph then merge together. The algorithm is run from phase $i = 0$ to phase $i = \max \lbrace \lceil \log_2 \sqrt{n} \rceil, \lceil \log_2 D' \rceil \rbrace$.

Let us now explain how to adapt the $\ControlledGHS$ algorithm to the asynchronous setting. First, in order to keep track of the current phase number, all nodes utilize a $\beta$-synchronizer over the tree $\BFSTree$ rooted at $\GraphLeader$.\footnote{Note that we do not use $\beta$ synchronizer to synchronize the beginning of each {\em round}, since this would have been too costly in messages. Using it to synchronize the beginning of each {\em phase} carries a cost we can afford.} 
We next describe how each phase of the algorithm is performed asynchronously. Note that in each phase, each node $u$, belonging to some fragment $F$, maintains information about $F$'s fragment identity $\ID_F$, $F$'s cluster identity\footnote{In stage II, $F$'s cluster ID is its fragment ID. We maintain both values throughout stage II to ensure that the procedures that are called run correctly.} $\ClusterID_F$, $u$'s children in $F$, and $u$'s parent in $F$. 
Initially, each node acts as its own fragment and sets both its fragment ID and cluster ID to its node ID. Each phase consists of 
a constant number of \emph{steps} described below.
We utilize a $\beta$-synchronizer, run by all the nodes over $\BFSTree$, also to keep track of the step number within a given phase via a \emph{step counter}. In phase $i$:

\textbf{Step 1: Each fragment determines if it can participate in the current phase.} Only fragments of diameter $\leq 2^i$, called \emph{active fragments}, may participate in 
phase $i$, so each fragment first checks its diameter, by running 
Procedure~$\DiameterCalculation$, and 
its root $R$ 
determines whether or not $F$ is an active fragment this phase, 
and informs all the other nodes of $F$ using
Procedure~$\FragmentBroadcast$.

\textbf{Step 2: Each active fragment finds its MOE.} Only nodes in active fragments perform the following set of procedures.\footnote{However, nodes that are not in active fragments still reply to messages from their active neighbors. For example, if an inactive node receives a query about its fragment ID, it will respond appropriately.} The nodes of fragment $F$ run Procedure~$\FindMOE$. (Procedure~$\FindMOE$ is described in Section \ref{sec:toolbox}.) Let $\MOEValue_F$ be the resulting MOE $R$ discovers. Subsequently, all nodes run Procedure~$\FragmentBroadcast$ for $R$ to transmit this value to all nodes in $F$.

\textbf{Step 3: Active fragments inform neighboring fragments about MOEs to them.} For every active fragment $F$, each node in $F$ transmits the value of $\MOEValue_F$, if any, to all neighbors. 

\textbf{Step 4: Pre-process the fragment graph before coloring.} 
Consider the directed supergraph $\cH$ formed by all fragments as super nodes 
and MOEs as edges\footnote{Note that there may exist an MOE from an active fragment to an inactive fragment.}. There may exist multiple connected subgraphs within this supergraph. In the next step, the algorithm colors the super nodes in each such subgraph $H^-$. Towards that,  step 4 
preprocesses $H^-$ to form a tree spanning it. 
Note that in each such subgraph $H^-$, there exist (exactly) two fragments with MOEs to each other, and in fact, the two MOEs correspond to the same edge (the two fragments form what is called a core in \cite{GallagerHS83}). 
The fragment with smaller ID among these two 
becomes the root of the resulting 
tree $T(H^-)$ spanning $H^-$ in the supergraph $\cH$, and the other fragment its child. As for all other fragments in $H^-$, they become the child of the other fragment endpoint of their MOE.  

In more detail, consider one such subgraph $H^-$ and a fragment $F$ within it. Recall that the previous step allows the nodes of $F$ to learn whether they share an MOE with another fragment $F'$. Then, a node $u$ in $F$ sets the flag $\IsFragmentRoot$ to $FALSE$ unless one of its incident edges $(u,v)$ is an MOE shared by $F$ as well as another fragment $F'$ and the ID of $F$ is smaller than that of $F'$, in which case $u$ sets the flag $\IsFragmentRoot$ to $TRUE$. 
All nodes in fragment $F$ run Procedure~$\Upcast$ to send this flag to the root of $F$. Since $F$ knows to expect exactly one message, the requirement for the procedure to have termination detection is satisfied and thus $F$ detects the termination of the step. (Procedure~$\Upcast$ is described in Section \ref{sec:toolbox}.) 

\textbf{Step 5: Color fragments.} 
In this step, the trees constructed in the previous step are colored. 
The previous step constructs a forest of rooted trees $T(H^-)$ (of super nodes) that spans the supergraph $\cH$,  
such that whenever super node $u$ has an MOE to super node $v$, $u$ is the child of $v$ in the tree containing them.

Consider the well-known algorithm of Cole and Vishkin~\cite{CV86}, hereafter referred to as Procedure~$\ColeVishkin$. Recall that the algorithm allows one to obtain a $6$-coloring of the tree in $O(\log^* n)$ rounds in the synchronous setting (see e.g., \cite{peleg-locality}). 
We simulate Procedure $\ColeVishkin$ on $\cH$ in an asynchronous setting. To do so, nodes in $G$ to keep track of the round numbers using a $\beta$-synchronizer on $\BFSTree$.
Furthermore, each round of the algorithm is divided into three sub-steps and a $\beta$-synchronizer on $\BFSTree$ is used to keep track of the step numbers. In sub-step one, the root of each fragment $F$ performs any local computation needed, (possibly) resulting in a message $M$ that needs to be transmitted to $F$'s children in $T(H^-)$ eventually. Before that, $M$ is broadcast to all nodes in $F$ via Procedure~$\FragmentBroadcast$.  In sub-step two, all nodes in $F$ transmit $M$ along any incident ``incoming'' MOEs (directed towards $F$ and thus connecting $F$ with its children in $T(H^-)$) and listen for any incoming message $M_i$ transmitted through an ``outgoing'' MOE (directed away from $F$ and thus connecting $F$ with its parent in $T(H^-)$).
In sub-step three, the nodes of $F$ run Procedure~$\Upcast$ to send the received message $M_i$ (if none were received, send a blank message) to the root of $F$.

\textbf{Step 6: Run maximal matching on the colored fragments.} 
At the end of the previous step, we have computed a $6$-coloring of the supergraph $\cH$. Given this coloring, it is straightforward to compute a matching on the supergraph $\cH$ efficiently, even in the asynchronous setting. To do so, fragments (or super nodes) simulate a naive greedy (6-round) synchronous algorithm on $\cH$; in round $i$, super nodes with color $i$ choose an arbitrary, unmatched child super node to match with and informs them of this. (For example, this can be the child fragment with the smallest ID.) This algorithm is simulated in the same way the Cole-Vishkin algorithm is in the previous step. 

\textbf{Step 7: Form the final graph of fragments to be merged into one another.} At the end of the previous step, we obtained a matching on the supergraph $\cH$. If a node (fragment) had diameter $d_{F,i} \leq 2^i$, but did not get matched in the previous step, it adds its MOE as an edge to $\cH$. 

Each node $u$ in the original graph participates in a single transmission to each of its neighbors $v$ to inform $v$ if $(u,v)$ has been re-added or not.

\textbf{Step 8: Merge fragments.} We now finally merge each connected subgraph of $\cH$, obtained at the end of the previous step, into a single fragment. Each newly created fragment takes on the smallest fragment ID from the fragments that merged together to create it. It is easy to see that a combination of a constant number of calls to Procedure~$\FragmentBroadcast$ and Procedure~$\Upcast$ (and a constant number of message on MOEs) results in all nodes in any connected subgraph learning about the minimum fragment ID. The node in the original graph $G$ with this minimum fragment ID becomes the root of the new merged fragment. After which, similarly through a constant number of calls to Procedure~$\FragmentBroadcast$ and Procedure~$\Upcast$ (and a constant number of message on MOEs), we can re-orient the edges so that this newly formed fragment is a tree.

After completing the last phase of the above process, we are almost ready to move to stage III of the algorithm.\footnote{As we use a $\beta$-synchronizer to keep track of which phase a node is in, it is possible to know when $\max \lbrace \lceil \log_2 \sqrt{n} \rceil, \lceil \log_2 D' \rceil \rbrace$ phases are over.} Some final cleanup is first needed. We need two things in order to ensure our subsequent upcasts and downcasts over $\BFSTree$ have termination detection: (i) $\GraphLeader$ needs to be made aware of how many base fragments are present and their IDs and (ii) each node in $\BFSTree$ needs routing information related to any fragment roots located in the subtree rooted at that node in $\BFSTree$.\footnote{Consider a node $u$ and let node $v$ be the root of a fragment located in the subtree rooted at $u$ in $\BFSTree$. We say node $u$ has routing information on $v$ when $u$ knows which of its children in $\BFSTree$ to send a message destined for $v$}

We need each fragment $F$ to inform $\GraphLeader$ of its existence and fragment ID. Now, the root of each fragment $F$, with ID $\ID_{F}$, initiates $\TreeCounting$ to determine the number of nodes in the fragment, $size_F$. (Procedure~$\TreeCounting$ is described in Section \ref{sec:toolbox}.) Subsequently, all nodes in the graph participate in Procedure~$\Upcast$ over $\BFSTree$ where each base fragment's root sends up the tuple $\langle ID_{F}, size_F \rangle$.\footnote{It is important to note that during Procedure~$\Upcast$, each node $u$ in $\BFSTree$ learns about which of its children in $\BFSTree$ lead to which fragment roots. In other words, $u$ learns routing information related to any fragments roots located in the subtree in $\BFSTree$ rooted at $u$, satisfying our second requirement from the previous paragraph.}  
$\GraphLeader$ accumulates these messages until $\sum_F size_F = n$, at which point $\GraphLeader$ knows the exact number of base fragments, say $\NumOfBaseFragments$, and their IDs. Once $\GraphLeader$ recognizes that it has received all the messages, it initiates a broadcast of $\NumOfBaseFragments$ over $\BFSTree$. Now all nodes are aware of the number of base fragments.

\textbf{Stage III.} In this stage, each node $u$ maintains two sets of variables.
One set of variables relates to the base fragment $B$ node $u$ it belongs to at the end of phase two. These variables store information about the base fragment such as the base fragment ID $\ID_B$, $u$'s parents in $B$, and $u$'s children in $B$. The second set of variables relates to what we term a \textit{cluster}, a connected subgraph in $\cH$ consisting of base fragments and MOEs between them, and they store information that includes a cluster ID and cluster edges.  
Each node belonging to base fragment $B$ initially sets its cluster ID $\ClusterID_B$ to be the same as its base fragment ID. Each node $u$ also stores a set of cluster edges adjacent to it in the set $\ClusterEdges_u$, which is initially empty. Edges are added to $\ClusterEdges_u$ in the course of stage III. At the end of stage III, for a given node $u$, the set of edges in the MST is the union of the set of edges in $\ClusterEdges_u$ and its children and parent in $B$. 
Node $\GraphLeader$ maintains, in addition, information on the supergraph $\cH$ formed by the base fragments (including the updated cluster IDs of those base fragments) and any MOE edges that $\GraphLeader$ computes in the phases of stage III, to be described below.

In stage III, each node participates in the following process for $\lceil \log_2 n \rceil$ phases until it terminates. Once again, nodes 
use a $\beta$-synchronizer over $\BFSTree$ 
to keep track of the phase number in stage III. 
In each phase, each base fragment $B$ with root $R_B$, fragment ID $\ID_B$, and cluster ID $\ClusterID_B$ runs Procedure~$\FindMOE$ to find its minimum outgoing edge, say $\MOEValue_B$, to a node with a different cluster ID, if there is any. 
All nodes in the graph then participate in Procedure~$\Upcast$ over $\BFSTree$ to send informatino on the fragments up to $\GraphLeader$. Specifically, each base fragment $B$'s root sends up the tuple consisting of information on $B$ as well as the computed MOE, if any. 

Once $\GraphLeader$ receives this tuple from all base fragments, it locally computes the MOE edges for each cluster in the supergraph $\cH$. Recall that a cluster is a connected subgraph of base fragments in $\cH$. Thus, the MOE from a cluster is really an MOE from one of the base fragments that constitutes it. 
Define $\FinalMOEValue_B$ as the MOE, if any, for base fragment $B$. For each base fragment $B$, $\GraphLeader$ computes its new cluster ID $\ClusterID_B$ (if multiple clusters merge, the smallest cluster ID becomes the ID of the new merged cluster),  
and its $\FinalMOEValue_B$ 
(if the original value of $\FinalMOEValue_B$ broadcast by $B$ was selected as a new edge in $\cH$, $\FinalMOEValue_B$ is set to $\MOEValue_B$, else it is set to a null value).

All nodes participate in Procedure~$\Downcast$ so that $\GraphLeader$ may inform each base fragment's root about its possibly new cluster ID and MOE edge. (Procedure~$\Downcast$ is described in Section \ref{sec:toolbox}.) Subsequently each base fragment participates in Procedure~$\FragmentBroadcast$ to send these values to all nodes in the fragment. Each node updates its cluster ID if needed. If there is information on a new MOE edge out of one of the nodes $u$, then $u$ adds this edge to $\ClusterEdges_u$. 
Once the final phase of stage III is complete, all nodes terminate the algorithm.

\section{Analysis of the MST Algorithm}
\label{sec:analysis}
We argue that Algorithm~$\MSTConstruction$ correctly outputs the MST with high probability and subsequently analyze its running time and message complexity.

It is easy to see that the algorithm faithfully simulates $\ControlledGHS$ in the asynchronous setting. Recall that $\ControlledGHS$ requires us to maintain two properties in each phase of the algorithm:  (i) at the end of phase $i$, there are at most $n/2^i$ fragments and (ii) at the end of phase $i$, each fragment has diameter $O(2^i)$. Since  the algorithm faithfully simulates $\ControlledGHS$, it follows from the analysis of
$\ControlledGHS$ (see e.g., \cite{Elkin17, PanduranganRS17}) that these properties are maintained in stage II. In stage III, they are also maintained via the ``soft merge'' process in a way that is time and message efficient. Note that in stage III, we ensure that those properties hold now on clusters instead of on fragments. These two properties guarantee that after the algorithm is over, there exists one cluster such that all nodes belong to the cluster and the only edges in the cluster are MST edges of the original graph. The high probability guarantee 
comes from the usage of (randomized) Procedures~$\LeaderElection$ and~$\SpanningTreeConstruction$.

We now bound the running time and message complexity in each stage of the algorithm. 
Consider stage I. We initially start with a graph $G$ with $n$ nodes, $m$ edges, and diameter $D$. Stage I involves running one instance of Procedure~$\LeaderElection$, 
one instance of Procedure~$\SpanningTreeConstruction(\epsilon)$ on $G$ to construct a spanning tree $\BFSTree$ of diameter $D' = \tilde{O}(D^{1+\epsilon})$, one instance of Procedure~$\DiameterCalculation$ on $\BFSTree$, and one instance of Procedure~$\FragmentBroadcast$ on $\BFSTree$. From Theorem~\ref{thm:spanningTreeConstruction}, Lemmas~\ref{lem:leader-election} and \ref{lem:frag-broadcast} and Observation~\ref{lem:diameter-calculation}, we have the following lemma. 

\begin{lemma}\label{lem:stage-I-guarantees}
Stage I of Algorithm~$\MSTConstruction$ takes $\Tilde{O}(D^{1+\epsilon})$ time with high probability and $\Tilde{O}(m)$ messages with high probability, for any constant $\epsilon > 0$. 
\end{lemma}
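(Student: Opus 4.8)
The plan is to bound the time and message complexity of Stage~I by summing the known bounds for its four constituent procedures, since Stage~I is just their sequential composition. First I would invoke Lemma~\ref{lem:leader-election}: Procedure~$\LeaderElection$ runs in $O(D + \log^2 n)$ time w.h.p.\ and $O(m \log^2 n)$ messages w.h.p., and (crucially for the rest of the algorithm) wakes up all nodes, so $D + \log^2 n = \tilde{O}(D) = \tilde{O}(D^{1+\epsilon})$ time and $\tilde{O}(m)$ messages. Next I would invoke Theorem~\ref{thm:spanningTreeConstruction} applied to $G$ with root $\GraphLeader$: Procedure~$\SpanningTreeConstruction(\epsilon)$ builds a spanning tree $\BFSTree$ of depth (and hence diameter) $\tilde{O}(D^{1+\epsilon})$ in $\tilde{O}(D^{1+\epsilon})$ time w.h.p.\ and $\tilde{O}(m)$ messages w.h.p., with termination detection. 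Set $D' = \tilde{O}(D^{1+\epsilon})$ to be this diameter.

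Then I would account for Procedure~$\DiameterCalculation$ run on $\BFSTree$: by Observation~\ref{lem:diameter-calculation} it takes $O(\mathrm{depth}(\BFSTree)) = O(D') = \tilde{O}(D^{1+\epsilon})$ time and $O(\mathrm{size}(\BFSTree)) = O(n)$ messages, with termination detection, after which $\GraphLeader$ knows $D'$. Finally, Procedure~$\FragmentBroadcast$ over $\BFSTree$ to disseminate $D'$: by Lemma~\ref{lem:frag-broadcast} this costs $O(\mathrm{depth}(\BFSTree)) = \tilde{O}(D^{1+\epsilon})$ time and $O(\mathrm{size}(\BFSTree)) = O(n)$ messages, again with termination detection. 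Note $n \le m$ (the graph is connected), so $O(n)$ messages is absorbed into $\tilde{O}(m)$.

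To assemble the bound I would argue that because each of the four procedures has termination detection, they can be invoked strictly sequentially: every node detects the end of one procedure before the next begins, so there is no interference between their message/time accounting and the total is just the sum of the individual bounds. Summing four terms each of which is $\tilde{O}(D^{1+\epsilon})$ time gives $\tilde{O}(D^{1+\epsilon})$ time, and summing message bounds $O(m\log^2 n) + \tilde{O}(m) + O(n) + O(n) = \tilde{O}(m)$. Taking a union bound over the w.h.p.\ guarantees of $\LeaderElection$ and $\SpanningTreeConstruction$ (a constant number of events, each failing with probability $n^{-\Omega(1)}$) preserves the ``with high probability'' qualifier for both time and message bounds. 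This yields the statement of Lemma~\ref{lem:stage-I-guarantees}.

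I do not expect a serious obstacle here — the lemma is essentially bookkeeping. The one point requiring a sentence of care is the sequential-composition argument: one must observe that termination detection (which every invoked procedure provides) is exactly what licenses treating Stage~I as a clean concatenation rather than worrying about asynchronous overlap, and that the $\tilde{O}(D^{1+\epsilon})$ depth of $\BFSTree$ established in Theorem~\ref{thm:spanningTreeConstruction} is what makes the two tree-based procedures ($\DiameterCalculation$ and $\FragmentBroadcast$) cheap in time. Everything else is a direct substitution into the cited bounds.
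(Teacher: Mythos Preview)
Your proposal is correct and matches the paper's own argument essentially exactly: the paper simply observes that Stage~I is the sequential composition of Procedures~$\LeaderElection$, $\SpanningTreeConstruction(\epsilon)$, $\DiameterCalculation$, and $\FragmentBroadcast$, and cites Theorem~\ref{thm:spanningTreeConstruction}, Lemmas~\ref{lem:leader-election} and~\ref{lem:frag-broadcast}, and Observation~\ref{lem:diameter-calculation} to conclude. If anything, your write-up is more careful than the paper's, since you make explicit the termination-detection argument licensing sequential composition and the union bound over the two randomized procedures.
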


Let us now look at stage II. We utilize a $\beta$-synchronizer over $\BFSTree$ to keep track of the $O(\max \lbrace \lceil \log_2 \sqrt{n} \rceil, \lceil \log_2 D' \rceil \rbrace)$ phases. By Lemma~\ref{lem:beta-synchronizer}, we see that this results in an additive overhead of $O(D')$ time per phase and $O(n)$ messages per phase.

In Step 1, every fragment's nodes participate in one instance of Procedure~$\DiameterCalculation$ in $\Tilde{O}(D + \sqrt{n})$ time and $\Tilde{O}(n)$ messages by Observation~\ref{lem:diameter-calculation}. Subsequently, every fragment's nodes participate in one instance of Procedure~$\FragmentBroadcast$ in $\Tilde{O}(D + \sqrt{n})$ time and $\Tilde{O}(n)$ messages by Lemma~\ref{lem:frag-broadcast}.
In Step 2, nodes participate in Procedure~$\FindMOE$ and then Procedure~$\FragmentBroadcast$, taking a total of $\Tilde{O}(D + \sqrt{n})$ time and $\Tilde{O}(m)$ messages by Lemmas~\ref{lem:finding-moe} and~\ref{lem:frag-broadcast}.
In Step 3, each node sends a message to each of its neighbors in $O(1)$ time and $O(m)$ messages.
In Step 4, every fragment's nodes participate in Procedure~$\Upcast$ on the fragment to send up one message in $\Tilde{O}(D + \sqrt{n})$ time and $\Tilde{O}(D + \sqrt{n})$ messages by Lemma~\ref{lem:upcast}.

In Step 5, nodes simulate Procedure~$\ColeVishkin$ on supergraph $\cH$ using a $\beta$-synchronizer on $\BFSTree$. Recall that Procedure~$\ColeVishkin$ is the Cole-Vishkin~\cite{CV86} algorithm that allows to $6$-color a $n$-node rooted tree in $O(\log^* n)$ rounds and $O(n \log^*n)$ messages. The simulation induces an additive overhead of $O(D')$ time per phase and $O(n)$ messages per sub-step. The correctness of step 5 follows from the correct simulation of each round of Algorithm~$\ColeVishkin$. Finally, each round executes one instance of Procedure~$\FragmentBroadcast$ for the first sub-step, one instance of Procedure~$\Upcast$ for the third sub-step and each node sends at most 1 message to its neighbors in $O(1)$ time and $O(m)$ messages for the second sub-step. Since all of the fragment trees are of depth $\Tilde{O}(D' + \sqrt{n})$ in every phase, each round takes at most $\Tilde{O}(D' + \sqrt{n})$ and $\Tilde{O}(m)$ messages.
Hence, we see that Step 5 results in $\Tilde{O}(D' + \sqrt{n})$ time and $\Tilde{O}(m)$ messages.


In Step 6, in a similar manner to step 5, nodes simulate a synchronous algorithm to compute maximal matching on supergraph $\cH$. The running time of this maximal matching algorithm is subsumed by the running time of Procedure~$\ColeVishkin$. Hence, Step 6 results in $\Tilde{O}(D' + \sqrt{n})$ time and $\Tilde{O}(m)$ messages.
In Step 7, each node in the original graph sends exactly one message to each of its neighbors in $O(1)$ time and $O(m)$ messages. 
In Step 8, nodes participate in a $O(1)$ calls to Procedure~$\FragmentBroadcast$ and Procedure~$\Upcast$. Additionally, each node sends $O(1)$ messages to each of its neighbors. Thus, Step 8 results in $\Tilde{O}(D' + \sqrt{n})$ time and $\Tilde{O}(m)$ messages by Lemmas~\ref{lem:frag-broadcast} and ~\ref{lem:upcast}.

After simulating $\ControlledGHS$ for $O(\max \lbrace \lceil \log_2 \sqrt{n} \rceil, \lceil \log_2 D' \rceil \rbrace)$ phases, we perform some additional procedures before moving on to stage III. Specifically, the nodes of each fragment participate in one instance of Procedure~$\TreeCounting$ in $\Tilde{O}(D' + \sqrt{n})$ time and $O(n)$ messages overall by Observation~\ref{lem:tree-counting}. Subsequently, all nodes in the graph participate in Procedure~$\Upcast$ over $\BFSTree$ to send up information on the  $\min \lbrace\sqrt{n}, n/D' \rbrace$ fragments to $\GraphLeader$ in  $O(\min \lbrace\sqrt{n}, n/D' \rbrace + D')$ time and $O(n)$ messages by Lemma~\ref{lem:upcast}. Finally, $\GraphLeader$ indicates the end of this stage by broadcasting the number of fragments to all nodes in the original graph over $\BFSTree$ in $\Tilde{O}(D')$ time and $\Tilde{O}(n)$ messages.

\begin{lemma}\label{lem:stage-II-guarantees}
Stage II takes $\Tilde{O}(D^{1+\epsilon}+ \sqrt{n})$ time and $\Tilde{O}(m)$ messages. 
\end{lemma}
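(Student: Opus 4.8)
The plan is to bound Stage~II by counting its phases, bounding one phase step-by-step via the toolbox lemmas, and then adding the $\beta$-synchronizer overhead and the one-time post-processing separately. The first observation is that Stage~II runs $O(\max\{\lceil \log_2 \sqrt{n}\rceil, \lceil \log_2 D'\rceil\}) = O(\log n)$ phases: the first term is clearly $O(\log n)$, and since $D' = \tilde{O}(D^{1+\epsilon}) = \poly(n)$ by Theorem~\ref{thm:spanningTreeConstruction} (as computed in Stage~I), $\log_2 D' = O(\log n)$ as well. The second observation, which I would state up front and use throughout, is that because the algorithm faithfully simulates $\ControlledGHS$, invariant~(ii) guarantees that at the start of phase $i$ every fragment has diameter $O(2^{i-1}) = O(D' + \sqrt{n}) = \tilde{O}(D^{1+\epsilon} + \sqrt{n})$ for all $i \le \max\{\lceil \log_2\sqrt{n}\rceil, \lceil \log_2 D'\rceil\}$; hence every fragment tree handed to a per-fragment procedure in Stage~II has depth $\tilde{O}(D^{1+\epsilon} + \sqrt{n})$. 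This includes inactive fragments, which the Step~1 diameter gate never lets participate (and thus grow) beyond that bound before the controlled merge of Step~8.

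With the depth bound in hand, each of the constant number of steps of a phase is bounded directly. Steps~1, 2 and~8 invoke a constant number of calls to $\FragmentBroadcast$, $\DiameterCalculation$, $\FindMOE$, $\TreeCounting$ and $\Upcast$ on fragments, each costing $\tilde{O}(D^{1+\epsilon} + \sqrt{n})$ time by Lemmas~\ref{lem:frag-broadcast},~\ref{lem:upcast},~\ref{lem:finding-moe} and Observations~\ref{lem:tree-counting},~\ref{lem:diameter-calculation}; the message cost is $\tilde{O}(n)$ for the broadcast/count-type operations (fragments are vertex-disjoint, so $\sum_F size(F) \le n$) and $\tilde{O}(m)$ for $\FindMOE$ (since $\sum_F \sum_{u \in F}\deg(u) \le \sum_{u \in V}\deg(u) = O(m)$). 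Steps~3 and~7 have each node send one message per incident edge: $O(1)$ time and $O(m)$ messages. Steps~5 and~6 simulate Cole--Vishkin and a $6$-round greedy matching on $\cH$ over $O(\log^* n)$ rounds, each round being a constant number of sub-steps that are themselves $\FragmentBroadcast$/$\Upcast$ calls on fragments plus one message per inter-fragment edge, so $\tilde{O}(D^{1+\epsilon} + \sqrt{n})$ time and $\tilde{O}(m)$ messages per phase; Step~4 is a single $\Upcast$ within the same bounds. Summing over the constantly-many steps gives $\tilde{O}(D^{1+\epsilon} + \sqrt{n})$ time and $\tilde{O}(m)$ messages per phase before synchronizer overhead.

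Next I would account for the $\beta$-synchronizer over $\BFSTree$, which tracks the phase counter, the step counter within a phase, and the $O(\log^* n)$ round/sub-step counters inside Steps~5--6: by Lemma~\ref{lem:beta-synchronizer} each such pulse costs $O(D') = \tilde{O}(D^{1+\epsilon})$ time and $O(n)$ messages, and there are only $\tilde{O}(1)$ pulses per phase, contributing $\tilde{O}(D^{1+\epsilon})$ time and $\tilde{O}(n)$ messages per phase. Multiplying the per-phase bound by the $O(\log n)$ phases, and adding the one-time post-processing --- $\TreeCounting$ on each fragment, an $\Upcast$ over $\BFSTree$ of the $\min\{\sqrt{n}, n/D'\}$ fragment tuples, and a final $\FragmentBroadcast$ over $\BFSTree$, together $\tilde{O}(D^{1+\epsilon} + \sqrt{n})$ time and $\tilde{O}(n)$ messages by Observation~\ref{lem:tree-counting} and Lemmas~\ref{lem:upcast},~\ref{lem:frag-broadcast} --- yields the claimed $\tilde{O}(D^{1+\epsilon} + \sqrt{n})$ time and $\tilde{O}(m)$ messages.

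The point requiring the most care --- rather than a genuine obstacle --- is the structural claim that every fragment tree touched in phase $i$ has depth $\tilde{O}(D^{1+\epsilon} + \sqrt{n})$; this rests entirely on invariant~(ii) of $\ControlledGHS$ together with the diameter gate of Step~1, and it is the single fact that makes all the per-fragment procedure costs small. Once that is granted, the rest is summation: since $\tilde{O}$ absorbs polylog factors, even a crude count of how many times each procedure is invoked per phase and per stage suffices, so no delicate cancellation is needed. The only place where one must resist over-counting is in not charging a fresh $\beta$-synchronizer pulse or a fresh $O(m)$-message round to \emph{every} asynchronous round of a per-fragment procedure; being explicit that the synchronizer is used only at the phase/step/Cole--Vishkin-round granularity, and that the $O(m)$-message operations ($\FindMOE$ and the inter-fragment message rounds of Steps~3, 5, 6, 7) recur only a constant number of times per phase, resolves this.
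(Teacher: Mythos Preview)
Your proposal is correct and follows essentially the same approach as the paper: a step-by-step accounting over the $O(\log n)$ phases, bounding each of Steps~1--8 via the toolbox lemmas using the $\tilde{O}(D'+\sqrt{n})$ fragment-depth invariant of $\ControlledGHS$, adding the $\beta$-synchronizer overhead, and then the one-time post-processing. If anything, you are slightly more explicit than the paper in stating the fragment-depth invariant up front and in justifying that the number of phases is $O(\log n)$; the paper's own text is the paragraph-by-paragraph analysis immediately preceding the lemma, which matches your decomposition almost line for line.
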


We now look at stage III. As in stage II, we utilize a $\beta$-synchronizer over $\BFSTree$ to keep track of the $O(\log n)$ phases. By Lemma~\ref{lem:beta-synchronizer}, we see that this results in an additive overhead of $O(D')$ time per phase and $O(n)$ messages per phase. 

Within each phase, the nodes of each base fragment $F$ participate in one instance of  Procedure~$\FindMOE$. From Lemma~\ref{lem:finding-moe}, we see that this takes $O(D' + \sqrt{n})$ time (since the diameter of each base fragment is at most $O(D' + \sqrt{n})$) and $O(m)$ messages since all nodes in the graph participate. Subsequently, all nodes in the original graph participate in Procedure~$\Upcast$ over $\BFSTree$ in order to send up information on $\min \lbrace\sqrt{n}, n/D' \rbrace$ fragments to $\GraphLeader$ in  $O(\min \lbrace\sqrt{n}, n/D' \rbrace + D')$ time and $O(n)$ messages by Lemma~\ref{lem:upcast}. After $\GraphLeader$ performs some computation, all nodes in the graph participate in Procedure~$\Downcast$ over $\BFSTree$ to send down $\min \lbrace\sqrt{n}, n/D' \rbrace$ pieces of information. From Lemma~\ref{lem:downcast}, we see that this takes $O(\min \lbrace\sqrt{n}, n/D' \rbrace + D')$ time and $O(n)$ messages. Finally, nodes from each base fragment participate in Procedure~$\FragmentBroadcast$ resulting in $O(D' + \sqrt{n})$ time and $O(n)$ messages by Lemma~\ref{lem:frag-broadcast}. Thus, we have the following lemma.

\begin{lemma}\label{lem:stage-III-guarantees}
Stage III takes $\Tilde{O}(D^{1+\epsilon}+ \sqrt{n})$ time and $\Tilde{O}(m)$ messages to complete.
\end{lemma}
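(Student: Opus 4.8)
The plan is to bound the cost of a single phase of stage III and then multiply by the $O(\log n)$ phases. First I would recall what the analysis of stage II (Lemma~\ref{lem:stage-II-guarantees}, together with the $\ControlledGHS$ guarantees) hands us at the start of stage III: there are at most $k := \min\{\sqrt{n}, n/D'\}$ base fragments, each spanned by a tree of depth $O(D' + \sqrt{n})$, and — the point that makes stage III cheap — these base fragments and their trees are \emph{never} modified during stage III, since merging is done ``softly'' by changing cluster IDs only. Consequently every invocation of Procedure~$\FindMOE$ and Procedure~$\FragmentBroadcast$ in stage III runs on a fixed tree of depth $O(D' + \sqrt{n})$. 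Because the base fragments are vertex-disjoint, the per-phase message cost of $\FindMOE$ summed over all base fragments is $O(\sum_{u \in V} \deg(u)) = O(m)$ (Lemma~\ref{lem:finding-moe}), and that of $\FragmentBroadcast$ summed over all base fragments is $O(\sum_{u \in V} 1) = O(n)$ (Lemma~\ref{lem:frag-broadcast}); the time cost of each is $O(D' + \sqrt{n})$.

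Next I would account for the traffic over $\BFSTree$. Since $\BFSTree$ has depth $D' = \tilde{O}(D^{1+\epsilon})$ by Theorem~\ref{thm:spanningTreeConstruction}, each $\Upcast$ and each $\Downcast$ of the $k$ base-fragment tuples over $\BFSTree$ costs $O(k + D')$ time and $O(k \cdot D' + n) = O(n)$ messages, where we use $k \cdot D' \leq (n/D') \cdot D' = n$ (Lemmas~\ref{lem:upcast} and~\ref{lem:downcast}). The $\beta$-synchronizer over $\BFSTree$ used to keep track of the $O(\log n)$ phases contributes an additive $O(D')$ time and $O(n)$ messages per phase (Lemma~\ref{lem:beta-synchronizer}), and $\GraphLeader$'s local computation of the new cluster IDs and the final MOEs is free. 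Summing these contributions within one phase yields $O(D' + \sqrt{n})$ time and $O(m)$ messages.

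Finally I would multiply by the $O(\log n)$ phases of stage III, obtaining a total of $O(\log n \cdot (D' + \sqrt{n})) = \tilde{O}(D^{1+\epsilon} + \sqrt{n})$ time and $O(\log n \cdot m) = \tilde{O}(m)$ messages (with the last-phase termination subsumed by one phase's cost), which is the claimed bound. I expect the only genuine subtlety — the ``main obstacle'' — to be the bookkeeping that pins the per-phase message cost at $O(m)$ rather than something larger: specifically, (i) that the base fragments remain vertex-disjoint throughout stage III so that the $\FindMOE$ (and neighbor-notification) costs telescope to $O(m)$ rather than $O(m)$ per fragment, (ii) that the soft merge keeps each base-fragment tree — and hence its depth $O(D' + \sqrt{n})$ — unchanged across all phases, and (iii) that $k \cdot D' = O(n)$ so the $\BFSTree$ upcast/downcast traffic is only $O(n)$ per phase. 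Everything else is a direct application of the toolbox lemmas of Section~\ref{sec:toolbox}.
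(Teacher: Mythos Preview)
Your proposal is correct and follows essentially the same approach as the paper's own proof: bound each per-phase component ($\FindMOE$, $\Upcast$/$\Downcast$ over $\BFSTree$, $\FragmentBroadcast$, and the $\beta$-synchronizer overhead) using the toolbox lemmas, then multiply by the $O(\log n)$ phases. If anything, you are slightly more explicit than the paper in justifying the $O(n)$ message bound for the $\BFSTree$ traffic via $k \cdot D' \le (n/D') \cdot D' = n$ and in articulating why the base-fragment trees (and hence their depths) remain fixed throughout stage III.
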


By Lemmas~\ref{lem:stage-I-guarantees},~\ref{lem:stage-II-guarantees}, and~\ref{lem:stage-III-guarantees} and our initial discussion about correctness, we get the following theorem.


\begin{theorem}\label{the:mst-alg}
Algorithm~$\MSTConstruction$ computes the minimum spanning tree  of an arbitrary graph with high probability in the asynchronous $KT_0$ $\mathcal{CONGEST}$ model   in $\Tilde{O}(D^{1+\epsilon}+ \sqrt{n})$ time with high probability and $\Tilde{O}(m)$ messages with high probability. Furthermore, nodes know their edges in the MST and terminate when the algorithm is over.
\end{theorem}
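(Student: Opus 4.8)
The plan is to derive the theorem from the three stage lemmas (Lemmas~\ref{lem:stage-I-guarantees}, \ref{lem:stage-II-guarantees}, and~\ref{lem:stage-III-guarantees}) together with a correctness argument. The first step is to argue that Algorithm~$\MSTConstruction$ faithfully simulates the synchronous $\ControlledGHS$ algorithm: each step of a phase is implemented by a fixed sequence of the fragment primitives of Section~\ref{sec:toolbox} --- broadcast, convergecast, upcast, downcast, MOE-finding, tree-counting, diameter-calculation --- each of which has termination detection, while phase and step counters are maintained by $\beta$-synchronizers over $\BFSTree$ (and, within a fragment, over the fragment trees). Since every primitive call terminates and signals termination before the next is invoked, and since a constant number of stage/phase/step bits is appended to every message so that messages belonging to a later stage or phase are queued rather than acted upon, asynchrony does not disturb the logical order of operations. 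Hence the two $\ControlledGHS$ invariants --- after phase $i$ there are at most $n/2^i$ fragments, and each fragment has hop-diameter $O(2^i)$ --- are preserved exactly as in the synchronous analyses of~\cite{Elkin17,PanduranganRS17}.

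The second step handles the Stage~III ``soft merge'': there the (at most $\min\{\sqrt{n},n/D'\}$) base fragments are frozen and only cluster IDs and marked cluster edges change, with $\GraphLeader$ doing the global bookkeeping after each upcast/downcast round over $\BFSTree$. I would verify that maintaining the analogue of the two invariants on \emph{clusters} suffices: $O(\log n)$ phases of at least halving the number of clusters leave a single cluster, and by the standard GHS cut/exchange argument~\cite{GallagerHS83} every marked cluster edge lies in the MST, so the final cluster is exactly the MST. Each node $u$ then outputs $\ClusterEdges_u$ together with its parent/child edges in its base fragment, which is precisely the set of MST edges incident to $u$; since the $\beta$-synchronizers detect when the last phase of each stage ends, and each stage ends with a broadcast from $\GraphLeader$ over $\BFSTree$ that all nodes acknowledge, every node detects termination.

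The third, essentially bookkeeping, step sums the complexities. Stage~I costs $\tilde{O}(D^{1+\epsilon})$ time and $\tilde{O}(m)$ messages w.h.p.\ (Lemma~\ref{lem:stage-I-guarantees}, which rests on Theorem~\ref{thm:spanningTreeConstruction}, Lemma~\ref{lem:leader-election}, and the diameter estimate $D'=\tilde{O}(D^{1+\epsilon})$); Stages~II and~III each cost $\tilde{O}(D^{1+\epsilon}+\sqrt{n})$ time and $\tilde{O}(m)$ messages (Lemmas~\ref{lem:stage-II-guarantees} and~\ref{lem:stage-III-guarantees}), where the $\sqrt{n}$ term arises from the $O(D'+\sqrt{n})$ diameter of the base fragments and the $O(D')$-per-phase $\beta$-synchronizer overhead over the $O(\log n)$ (respectively $O(\max\{\log\sqrt{n},\log D'\})$) phases. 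Summing the three stages gives $\tilde{O}(D^{1+\epsilon}+\sqrt{n})$ time and $\tilde{O}(m)$ messages, and a union bound over the failure events of the only two randomized ingredients, Procedures~$\LeaderElection$ and~$\SpanningTreeConstruction$ (and over the $O(\log D)$ diameter guesses), keeps the overall failure probability polynomially small, so all bounds hold with high probability.

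I expect the main obstacle to lie not in this final combining step but in the parts beneath it: making rigorous that a $\beta$-synchronizer on $\BFSTree$ tracking only phases and steps (and not individual rounds) still keeps the fragment-level computation correct even though rounds inside a fragment are not globally synchronized, and that the convention by which inactive fragments merely answer queries in Steps~1--3 does not corrupt the MOE-finding and maximal-matching logic. For the theorem exactly as stated, the only remaining subtlety is that termination detection composes across the three sequentially executed stages, which holds because each stage concludes with an acknowledged broadcast from $\GraphLeader$ over $\BFSTree$ before the next stage begins.
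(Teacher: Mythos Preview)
Your proposal is correct and follows essentially the same approach as the paper: the paper's proof of this theorem is a one-line appeal to Lemmas~\ref{lem:stage-I-guarantees}, \ref{lem:stage-II-guarantees}, and~\ref{lem:stage-III-guarantees} together with the short correctness discussion at the start of Section~\ref{sec:analysis}, which (like you) argues that the algorithm faithfully simulates $\ControlledGHS$, that the two fragment invariants carry over, that Stage~III maintains the analogous invariants on clusters, and that the only randomized ingredients are Procedures~$\LeaderElection$ and~$\SpanningTreeConstruction$. Your write-up is in fact more detailed than the paper's on termination detection and on why phase/step-level $\beta$-synchronization suffices, but the structure and the cited lemmas are identical.
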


As a consequence of the above theorem and a theorem due to Mashregi and King~\cite{KMDISC19}[Theorem 1.2] we also get the following result in the $KT_1$ model. 

\begin{theorem}
\label{the:mst-alg-kt1}
There is an asynchronous  algorithm  that computes the minimum spanning tree  of an arbitrary graph with high probability in the asynchronous $KT_1$ $\mathcal{CONGEST}$ model in
$\tilde{O}(D^{1+\epsilon}+ n^{1-2\delta})$ time 
 and $\tilde{O}(n^{3/2+\delta})$ messages  for any small constant $\epsilon > 0$ and for any $\delta \in [0,0.25]$.
\end{theorem}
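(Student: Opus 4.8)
The plan is to invoke Theorem~\ref{the:mst-alg} together with the black-box reduction of Mashreghi and King~\cite{KMDISC19}[Theorem 1.2], which converts any asynchronous $KT_0$ MST algorithm into an asynchronous $KT_1$ MST algorithm with a message--time tradeoff governed by a sparsification parameter. First I would recall the precise statement of~\cite{KMDISC19}[Theorem 1.2]: given a graph in the $KT_1$ model, one can, using $\tilde{O}(n^{3/2+\delta})$ messages and $\tilde{O}(n^{1-2\delta})$ additional time, reduce the MST problem on $G$ with $m$ edges to the MST problem on a sparsified subgraph $G'$ that has $\tilde{O}(n^{3/2+\delta})$ edges (for $\delta \in [0,0.25]$), preserving the diameter up to a $\polylog(n)$ factor (or at most not increasing it by more than the relevant terms). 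One then runs our $KT_0$ algorithm (Theorem~\ref{the:mst-alg}) on $G'$.

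The key steps, in order, are: (1) apply the Mashreghi--King reduction to obtain $G'$ with $m' = \tilde{O}(n^{3/2+\delta})$ edges, at a cost of $\tilde{O}(n^{3/2+\delta})$ messages and $\tilde{O}(n^{1-2\delta})$ time; (2) note that the diameter of $G'$ is $\tilde{O}(D)$ (the sparsification is a spanning subgraph containing a BFS-like backbone, so its diameter is $D$ up to polylog factors — I would confirm this is what~\cite{KMDISC19} guarantees, or otherwise absorb the blow-up into the stated bounds); (3) run Algorithm~$\MSTConstruction$ on $G'$ in the $KT_0$ model (which is legitimate since $KT_1$ subsumes $KT_0$), which by Theorem~\ref{the:mst-alg} costs $\tilde{O}((D')^{1+\epsilon} + \sqrt{n}) = \tilde{O}(D^{1+\epsilon} + \sqrt{n})$ time and $\tilde{O}(m') = \tilde{O}(n^{3/2+\delta})$ messages; (4) add the two cost contributions: the total time is $\tilde{O}(n^{1-2\delta}) + \tilde{O}(D^{1+\epsilon} + \sqrt{n}) = \tilde{O}(D^{1+\epsilon} + n^{1-2\delta})$ (since $\sqrt{n} = n^{1/2} \le n^{1-2\delta}$ for $\delta \le 1/4$), and the total message complexity is $\tilde{O}(n^{3/2+\delta}) + \tilde{O}(n^{3/2+\delta}) = \tilde{O}(n^{3/2+\delta})$. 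Correctness is immediate: the reduction guarantees that the MST of $G'$ equals the MST of $G$ restricted appropriately, and our algorithm computes the MST of $G'$ with high probability, so the composition succeeds with high probability by a union bound over the two randomized components.

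The main obstacle I anticipate is purely one of bookkeeping the interface: one must verify that the guarantees of~\cite{KMDISC19}[Theorem 1.2] are stated in a form directly composable with Theorem~\ref{the:mst-alg} — specifically that the sparsified graph's diameter is $\tilde{O}(D)$ (so that the $D^{1+\epsilon}$ term is not worsened), that the reduction is itself an asynchronous procedure with termination detection (so that our algorithm can be safely invoked afterward), and that the reduction's parameter range $\delta \in [0,0.25]$ matches. If any of these is only implicit in~\cite{KMDISC19}, the fix is to note that their sparsifier is a spanning subgraph obtained by keeping a low-diameter backbone plus a random sample of edges, whence $\diam(G') = \tilde{O}(D)$ follows; I would include a one-sentence remark to that effect rather than reproving it. Finally, setting $\delta = 1/4$ yields the headline corollary with time $\tilde{O}(D^{1+\epsilon} + \sqrt{n})$ and messages $\tilde{O}(n^{7/4})$, which I would state explicitly.
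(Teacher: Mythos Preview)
Your proposal is correct and takes essentially the same approach as the paper: the paper's entire ``proof'' is the single sentence stating that Theorem~\ref{the:mst-alg-kt1} is a consequence of Theorem~\ref{the:mst-alg} and~\cite{KMDISC19}[Theorem 1.2], i.e., one plugs the new $KT_0$ algorithm into the Mashreghi--King black-box. Your write-up actually supplies more detail than the paper does (the sparsification mechanics, the arithmetic $\sqrt{n}\le n^{1-2\delta}$ for $\delta\le 1/4$, and the interface concerns about diameter preservation and termination detection), all of which is reasonable bookkeeping; the paper simply asserts the plug-in works and moves on.
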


The above theorem gives the first asynchronous MST algorithm in the  $KT_1$ $\mathcal{CONGEST}$ model that has {\em sublinear time} (for all $D = O(n^{1-\epsilon'})$ for any arbitrarily small
constant $\epsilon' > 0$) and {\em sublinear} messages complexity.

\section{Conclusion and Open Problems}
\label{sec:conclusion}
Recall that while most of the paper deals with the common $KT_0$ model, Theorem \ref{the:mst-alg-kt1} includes a contribution also under the $KT_1$ model. This model has grown in popularity in recent years for two reasons. Firstly, one can claim it is a more natural model \cite{awerbuch1990trade}. Secondly, it allows for the reduction of communication to $o(m)$ messages. Initially, it looked as if this reduction resulted in a significant cost in time complexity, trading off the attempt to go below $\Omega(n)$ time when the diameter is smaller
 \cite{KingKT15}. This went against the direction of the $KT_0$ model, where algorithms managed to be efficient both in time complexity and message complexity \cite{Elkin17,PanduranganRS17,gmyr,GhaffariK18}. 
 Those results, however, were in the synchronous model. Theorem \ref{the:mst-alg-kt1} (together with \cite{KMDISC19}[Theorem 1.2]) is the first result that approaches optimal time while keeping the message complexity at $o(m)$. It would be interesting to see whether this is the best that can be obtained in this direction. Results showing that other tasks can be obtained with $o(m)$ messages but time efficiently in $KT_1$ would also be interesting.

The asynchronous distributed MST algorithm for $KT_0$
 presented here  continues a long line of work in distributed MST algorithms.
Our algorithm  essentially (up to a polylog($n$) factor) matches the respective time and message lower bounds, but for an arbitrarily small constant factor $\epsilon$ in the exponent of $D$ (with respect to time).
Yet, several open problems remain. Is it possible to achieve near singular optimality? That is, can we achieve optimality within a $\polylog(n)$ factor in both time and messages?
This seems related to constructing an $\tilde{O}(D)$ diameter spanning tree   in a singularly optimal fashion which is also open. Our low-diameter spanning tree construction comes close to achieving this, but for a $\tilde{O}(D^{\epsilon})$ factor in the diameter and run time. This is also closely related to constructing a BFS  (or nearly BFS) tree in a singularly optimal fashion.

The tools and techniques used in this paper  for accomplishing various tasks in a (almost) singularly optimal fashion in an asynchronous setting 
can also be useful in solving other fundamental problems such as shortest paths, minimum cut, etc. In particular, the techniques of this paper can be used to
show that the partwise aggregation problem in the low-congestion framework of Ghaffari and Haeupler \cite{low-congestion-mst} can be implemented in the asynchronous setting in  $\tilde{O}(D^{1+\epsilon} + \sqrt{n})$ and $\tilde{O}(m)$ messages. 
Distributed algorithms for fundamental problems such as MST, minimum cut, and shortest paths can be cast as solving
a suitable partwise aggregation problem. In general, due to the $\tilde{\Omega}(D+\sqrt{n})$ lower bound (\cite{peleg-bound,elkin,stoc11}) one cannot hope to solve the partwise aggregation problem in general graphs
faster than $\tilde{O}(D + \sqrt{n})$. Our techniques imply that the following
can be done in a singularly near-optimal fashion in the asynchronous setting: (1) one can reduce the
MST problem into a partwise aggregation problem consisting of 
at most $\min \lbrace\sqrt{n}, n/D^{1+\epsilon} \rbrace$ clusters, each of which has diameter $O(D^{1+\epsilon} +  \sqrt{n})$ and (2) one can construct  a $D^{1+\epsilon}$-diameter spanning tree which can be used as
a low-congestion shortcut.
Both of these  show that the partwise aggregation technique as applicable for the MST yields  $\tilde{O}(D^{1+\epsilon} + \sqrt{n})$ and $\tilde{O}(m)$ messages in asynchronous networks. 
Since the partwise aggregation framework applies to other problems such as  approximate minimum cut
\cite{low-congestion-mst} and approximate single source shortest paths~\cite{hli}, our techniques can imply that these problems 
can also be solved  almost singularly optimally. 

For our singularly optimal algorithms, we focused on being (existentially) optimal in time
with respect to parameters $n$ and $D$ (i.e., with respect to the $\tilde{\Omega}(D+\sqrt{n})$ bound).
An interesting direction of future work is obtaining asynchronous algorithms that 
are ``universally optimal'' (Haeupler, Wajc, and  Zuzic \cite{universal-optimality-mst}) (with respect to time) and also optimal with respect to messages. 

\clearpage
\bibliographystyle{plainurl}
\bibliography{references}


\end{document}